\documentclass[11pt]{article}

\usepackage[a4paper,margin=25mm]{geometry}
\usepackage{amsmath,amssymb,amsthm}
\usepackage{physics}
\usepackage{bm}

\usepackage{graphicx}
\usepackage{hyperref}
\usepackage{cite}
\usepackage{arydshln}

\usepackage{pgfplots}
\pgfplotsset{compat=1.18}

\theoremstyle{plain}
\newtheorem{theorem}{Theorem}[section]
\newtheorem{proposition}[theorem]{Proposition}
\newtheorem{lemma}[theorem]{Lemma}
\newtheorem{corollary}[theorem]{Corollary}

\theoremstyle{definition}
\newtheorem{definition}[theorem]{Definition}

\theoremstyle{remark}
\newtheorem{remark}[theorem]{Remark}

\counterwithin{equation}{section}

\newcommand{\startappendix}{
  \setcounter{section}{1}
  \renewcommand{\thesection}{A}
  \setcounter{subsection}{0}
  \renewcommand{\thesubsection}{A.\arabic{subsection}}
  \setcounter{theorem}{0}
  \renewcommand{\thetheorem}{A.\arabic{theorem}}
  \setcounter{equation}{0}
  \renewcommand{\theequation}{A.\arabic{equation}}
}

\title{%
Standard Quantum Phase Estimation Detects All Eigenvalues via Randomized Initial States
}

\author{%
Yuki Izumi\thanks{yuki.izumi@ctc-g.co.jp}
\hspace{2em}
Hitoshi Kawahara\thanks{hitoshi.kawahara@ctc-g.co.jp}
\\[1em]
\normalsize ITOCHU Techno-Solutions Corporation, Tokyo, Japan
}

\date{}

\begin{document}
\maketitle

\begin{abstract}
Standard quantum phase estimation (QPE) has often been regarded as unsuitable for simultaneous detection of all eigenvalues, because it requires initial states with sufficient overlap with the target eigenstates.
In this paper, we show that this limitation is not inherent to the QPE circuit itself.
The output distribution of standard QPE can be written as a superposition of Fej\'er kernels weighted by the squared overlaps with the eigenmodes.
We prove that, if the initial state is independently drawn at each shot from a 1-design (in particular, by random selection of computational basis states), these mode weights are equalized in expectation, yielding a state-averaged QPE distribution that exhibits peaks at every eigenphase location.
In this sense, all eigenvalues become accessible without any modification of the standard QPE circuit; repeated eigenvalues appear through the aggregated weight of their eigenspaces.
For distinct eigenphases satisfying a separation condition, we further establish a rigorous peak-detection theory and derive a sufficient shot-count estimate for detecting all peaks.
We validate the theory through numerical experiments on a finite element method (FEM) matrix with $1{,}008$ degrees of freedom arising from computer-aided engineering (CAE).
\end{abstract}

\section{Introduction}\label{sec:introduction}

Quantum phase estimation (QPE) is one of the most important algorithms in quantum computing, serving as a foundation for integer factorization~\cite{shor1994algorithms,nielsen2010quantum}, quantum linear system solvers~\cite{harrow2009quantum}, and many other applications.
QPE is an algorithm for estimating the eigenphases of a unitary operator $U$ and serves as a core subroutine for a large family of quantum algorithms~\cite{martyn2021grand}.

Improving the accuracy and efficiency of QPE remains an active area of research.
One of the intrinsic limitations of QPE is spectral leakage: when an eigenphase cannot be exactly represented as a finite-length binary fraction on the measurement grid, the measurement probability leaks to bins other than the closest one~\cite{lim2024curvefitted}.
Under the constraints on the number of ancilla qubits in the NISQ setting, this leakage is particularly severe, and advanced classical post-processing methods such as Bayesian estimation~\cite{obrien2019quantum} and curve fitting~\cite{lim2024curvefitted} have been investigated to mitigate spectral leakage.
In the FTQC setting, a sufficient number of ancilla qubits can be allocated to suppress spectral leakage to a negligible level.

However, a more fundamental problem persists in the execution of QPE.
When standard QPE is applied to an initial state $|\psi\rangle = \sum_k c_k |\psi_k\rangle$ (a superposition of eigenstates $|\psi_k\rangle$ of $U$), the output distribution takes the form
\begin{align*}
    p_j = \sum_k |c_k|^2 \, \tilde{F}_N(2\pi(\theta_k - a_j)),
\end{align*}
where $\tilde{F}_N$ is the normalized Fej\'{e}r kernel, $\theta_k$ are the eigenphases, and $a_j = j/N$ are the measurement grid points.
Since the height of the peak corresponding to each eigenvalue is proportional to the weight $|c_k|^2$, peaks for modes with small $|c_k|^2$ are buried under spectral leakage from other modes, making detection difficult.
This has made the preparation of a ``good'' initial state---one with sufficient overlap to the target eigenstate---a de facto prerequisite for running QPE.

Considerable effort has been devoted to satisfying this prerequisite.
In quantum chemistry, the Hartree--Fock state provides a good approximation to the ground state, enabling high-precision estimation of a single eigenvalue (the ground state energy) when used as the initial state.
However, this approach relies on physical intuition specific to quantum chemistry and does not guarantee the general-purpose applicability of QPE.
Studies on QPE for multiple eigenvalues~\cite{obrien2019quantum} also operate under fixed initial states and perform estimation under nonuniform weights $|c_k|^2$, without addressing the bias itself.
In settings such as the generalized eigenvalue problem $\boldsymbol{\mathrm{K}}\boldsymbol{\mathrm{v}} = \lambda \boldsymbol{\mathrm{M}}\boldsymbol{\mathrm{v}}$ arising from the finite element method (FEM) in computer-aided engineering (CAE), where detection of all eigenvalues within a specified frequency band is required, preparing an initial state with sufficient overlap to the eigenmodes in the target band is inherently difficult---and this difficulty becomes more severe precisely as the problem size grows to the scale where QPE is needed.

At the root of this difficulty lies the fact that eigenvalues are intrinsic properties of the matrix.
Classical eigenvalue solvers take a matrix as input and return eigenvalues without requiring any additional prior information.
To make QPE a general-purpose eigenvalue detection tool, the dependence on the initial state must be eliminated entirely.

In this paper, we present a probabilistic approach to this problem.
We show that simply selecting the initial state randomly from the computational basis at each shot---which corresponds to a state 1-design construction---equalizes the weights $|c_k|^2$ to $1/M$ in expectation, where $M$ is the dimension of the matrix.
This equalization renders the QPE output distribution independent of the initial state, producing Fej\'{e}r kernel peaks of uniform height at every eigenvalue position.
We call this distribution the state-averaged QPE measure.

Furthermore, we develop a rigorous theory of peak detection under the state-averaged QPE measure.
Specifically, we define a detection set $C$ consisting only of grid points near eigenphase peaks, and prove the existence of a positive gap between the measurement probabilities of $a_j \in C$ and $a_j \notin C$ (Theorem~\ref{theorem:peak detection}).
Based on this gap, we provide an estimate of the shot count sufficient for detecting all eigenvalues via threshold testing on the empirical distribution (Theorem~\ref{theorem:shot estimate}).

In addition, the state-averaged formulation is also useful for obtaining rough prior information on eigenvalue locations.
Because it does not depend on a particular initial-state overlap pattern, the present approach can indicate the approximate locations of previously unknown eigenvalues without relying on prior guesses about where they lie.

The main contributions of this paper are as follows.
First, we show that standard QPE, without any circuit modification, makes all eigenvalues accessible using only 1-design randomized initial states.
Second, under a distinct-eigenphase separation assumption, we develop a peak detection theory based on properties of the Fej\'er kernel, provide an estimate of the shot count sufficient for detection of all distinct eigenvalue peaks, and establish a rigorous sample-based peak detection result for the detected set $\hat{C}$.
Third, we validate the theoretical claims through numerical experiments on an FEM matrix with $1{,}008$ degrees of freedom arising from CAE.

The remainder of this paper is organized as follows.
Section~\ref{sec:related work} discusses the relation of the present work to DOS estimation, multi-eigenvalue phase estimation, and modified QPE methods.
Section~\ref{sec:preliminaries} introduces the basic notions of QPE, the Fej\'{e}r kernel, block encoding, and 1-designs.
Section~\ref{sec:state averaged QPE} defines the state-averaged QPE measure and shows its realization via 1-designs.
Section~\ref{sec:peak detection} presents the peak detection theory and the shot count estimate.
Section~\ref{sec:numerical experiments} validates the theoretical claims through numerical experiments on an FEM matrix arising from CAE.
Section~\ref{sec:conclusion} summarizes the results and discusses future directions.
Proofs and technical details are deferred to the Appendix.

\section{Related Work}\label{sec:related work}

To clarify the position of the present work, we compare it with related studies on DOS estimation, multi-eigenvalue phase estimation, and modified QPE methods.

\subsection{Relation to Density of States Estimation}

The literature on density of states (DOS) estimation~\cite{goh2026dos,bai2026random} also uses randomized initial states, but its objective is different from ours.
Goh--Koczor~\cite{goh2026dos} estimates DOS from time evolution of simple random initial states, such as randomly chosen computational basis states, recovering DOS on average up to a Gaussian convolution window.
Bai et al.~\cite{bai2026random} develops random-state quantum algorithms for electronic-structure properties, combining Hadamard-test-based DOS estimation with a modified QPE method for local density of states (LDOS).

These works target coarse-grained spectral density or spatially resolved LDOS rather than simultaneous identification of all individual eigenphases.
By contrast, the present work focuses on the standard QPE output distribution itself and gives a peak-detection theory for locating individual eigenphases under the Fej\'er-kernel structure.

\subsection{Multi-Eigenvalue Phase Estimation}

There is also a broader literature on estimating multiple eigenvalues or phases from quantum measurement data.
O'Brien et al.~\cite{obrien2019quantum} studies low-cost QPE with a single ancilla qubit and compares time-series and Bayesian post-processing when the input state is not an eigenstate.
Dutkiewicz et al.~\cite{dutkiewicz2022heisenberg} develops Heisenberg-limited phase estimation of multiple eigenvalues with few control qubits, combining few-control-qubit phase-estimation routines with time-series or matrix-pencil subroutines.
Somma~\cite{somma2019eigenvalue} considers a different line of work: eigenvalue estimation from expectation values of the evolution operator via time-series analysis, explicitly contrasting this approach with QFT-based QPE.
Lim et al.~\cite{lim2024curvefitted} keeps the standard QPE circuit and improves precision by curve fitting to the full QPE output distribution, while also mentioning a possible extension to multiple phases.

These works either rely on post-processing of measurement data obtained from a fixed initial state or introduce alternative phase-estimation schemes.
By contrast, the present work addresses this initial-state-induced nonuniformity itself through per-shot randomization of the initial state and then performs eigenphase detection directly from the resulting state-averaged standard-QPE distribution via a simple QPE-only peak-detection rule with a shot-count guarantee.

\subsection{Modified QPE for LDOS}

Bai et al.~\cite{bai2026random} also introduces a modified QPE (M-QPE) method for real-space LDOS.
Its circuit prepares ancilla phases, applies controlled time evolutions, performs a QFT on the ancilla register, and then post-selects the ancilla outcome to project onto a chosen energy component.

The objective of M-QPE is to recover spatial distributions at selected energies, not to detect all eigenphases simultaneously.
By contrast, the present work does not modify the standard QPE circuit.
Instead, it uses per-shot randomization of the initial state to define a state-averaged QPE distribution and then detects all eigenphases through a simple peak-detection rule with an explicit shot-count guarantee.

\section{Preliminaries}\label{sec:preliminaries}

\subsection{Notation and Basic Setup}

Throughout this paper, let $m,n\in\mathbb{N}$ and set $M = 2^m$, $N = 2^n$.
Define $a_j = j/N$ for $j = 0, 1, \dots, N-1$, and write
$A_N = \{a_j \mid j = 0, 1, \dots, N-1\}$,
$\Theta = \{\theta_k \in [0,1) \mid k = 0, 1, \dots, M-1\}$,
$S_{\mathbb{C}}^{M-1} = \{z \in \mathbb{C}^{M} \mid \sum_{j=0}^{M-1}|z_j|^2 = 1\}$,
and denote by $U(M)$ the set of all $M \times M$ unitary matrices.
The torus $\mathbb{T} = \mathbb{R}/\mathbb{Z}$ is equipped with the distance $|x - y|_{\mathbb{T}} = \min_{k \in \mathbb{Z}} |x - y - k|$, and we regard $A_N$ and $\Theta$ as subsets of $[0,1) \subset \mathbb{T}$.

Let $X,Y,Z$ denote the $2 \times 2$ Pauli matrices, i.e.,
\begin{align*}
    X &=
    \begin{pmatrix}
        0 & 1 \\
        1 & 0
    \end{pmatrix}, \quad
    Y =
    \begin{pmatrix}
        0 & -i \\
        i & 0
    \end{pmatrix}, \quad
    Z =
    \begin{pmatrix}
        1 & 0 \\
        0 & -1
    \end{pmatrix}.
\end{align*}
$I$ denotes the identity matrix of appropriate size.

For $j \in \mathbb{Z}$, we extend periodically by setting $a_{j+N} = a_j$ and $\theta_{j+M} = \theta_j$.
For $a, b \in [0,1) \subset \mathbb{T}$, the open interval $(a, b)_{\mathbb{T}}$ is defined as the image under the natural projection $\mathbb{R} \to \mathbb{T}$:
\begin{align*}
    (a, b)_{\mathbb{T}}
    &=
    \begin{cases}
        (a, b), & a \leq b, \\[.5em]
        (a, 1+b), & a > b,
    \end{cases}
\end{align*}
and closed and half-open intervals are defined similarly.

For $k = 1, 2, \dots$, define $|0^k\rangle = |0\rangle^{\otimes k}$.
We write $\{I, X, Y, Z\}^{\otimes k}$ and $\{I, X\}^{\otimes k}$ for $\{\bigotimes_{j=1}^{k} P_j \mid P_j = I, X, Y, Z\}$ and $\{\bigotimes_{j=1}^{k} P_j \mid P_j = I, X\}$, respectively.

For a set $\mathcal{X}$, we denote its power set by $2^{\mathcal{X}}$.
When $\mathcal{X}$ is a topological space, we denote its Borel $\sigma$-algebra by $\mathcal{B}(\mathcal{X})$.
For notational convenience, we write $\mathcal{X}$ for the measurable spaces $(\mathcal{X}, \mathcal{B}(\mathcal{X}))$ and $(\mathcal{X}, 2^{\mathcal{X}})$ when there is no ambiguity.

Let $U \in U(M)$ have the eigendecomposition
\begin{align*}
    U = \sum_{k=0}^{M-1} e^{i2\pi\theta_k} |\psi_k\rangle\langle\psi_k|
\end{align*}
with eigenphases $\Theta$.

For $c = (c_0, \dots, c_{M-1}) \in S_{\mathbb{C}}^{M-1}$, define
\begin{align}
    |\psi\rangle
    &= \sum_{k=0}^{M-1} c_k |\psi_k\rangle. \label{eq:initial state}
\end{align}

\subsection{Quantum Phase Estimation (QPE)}

Let $|\psi\rangle$ be the initial state defined in~\eqref{eq:initial state}.
When standard QPE~\cite{nielsen2010quantum} with an $n$-bit ancilla register is applied to $|0^n\rangle \otimes |\psi\rangle$, the probability of obtaining measurement outcome $j = 0, 1, \dots, N-1$ on the ancilla register is
\begin{align}\label{eq:QPE distribution main}
    p_j = \frac{1}{N^2} \sum_{k=0}^{M-1} |c_k|^2 \, \frac{\sin^2 \pi N(\theta_k - a_j)}{\sin^2 \pi (\theta_k - a_j)}
\end{align}
(Proposition~\ref{proposition:QPE distribution}).
$N^{-2}\sin^2 \pi N(\theta_k - a_j)/\sin^2 \pi (\theta_k - a_j)$ above is the normalized Fej\'{e}r kernel $\tilde{F}_N(2\pi(\theta_k - a_j))$, which produces a peak of width $2/N$ centered at each eigenphase $\theta_k$.

\subsection{The Fej\'{e}r Kernel}

Define $F_N \colon \mathbb{R} \to \mathbb{R}$ by
\begin{align*}
    F_N(x) =
    \begin{cases}
        N, & x = 0, \pm 2\pi, \pm 4\pi, \dots \\[.5em]
        \displaystyle \frac{1}{N} \frac{\sin^2(Nx/2)}{\sin^2(x/2)}, & \text{otherwise}.
    \end{cases}
\end{align*}
Set $\tilde{F}_N(x) = F_N(x)/F_N(0) = F_N(x)/N$.
Then~\eqref{eq:QPE distribution main} can be written as 
\begin{align*}
    p_j &= \sum_{k=0}^{M-1} |c_k|^2 \tilde{F}_N(2\pi(\theta_k - a_j)).
\end{align*}

As a basic property of the Fej\'{e}r kernel, $\sum_{j=0}^{N-1} \tilde{F}_N(2\pi(\theta_k - a_j)) = 1$ holds (Lemma~\ref{lemma:Fejer properties}-(i)).
This is consistent with the fact that the QPE output forms a probability distribution.

\subsection{Block Encoding}\label{subsec:block encoding prelim}

For a positive semidefinite matrix $A$ with $\|A\| \leq 1$, a unitary matrix $U$ that has $A$ as its upper-left block is called a block encoding of $A$~\cite{GSLW19}. That is,
\begin{align}
    U = \begin{pmatrix} A & \cdot \\ \cdot & \cdot \end{pmatrix} \label{eq:def_U}
\end{align}
with $(\langle 0| \otimes I) U (|0\rangle \otimes I) = A$.
For concrete constructions of block encodings, see~\cite{martyn2021grand, GSLW19, low2019hamiltonian}.

\subsection{1-Design~\cite{mele2024haar}}

The set $\{I, X\}^{\otimes m}$ with equal probabilities $1/M$ constitutes a state 1-design (also called a spherical 1-design).
For all $k = 0, 1, \dots, M-1$,
\begin{align*}
    \frac{1}{M} \sum_{P \in \{I,X\}^{\otimes m}} |\langle \psi_k | P | 0^m \rangle|^2 = \frac{1}{M}
\end{align*}
holds (Lemma~\ref{lemma:Pauli sum}-(ii)).

The set $\{I, X, Y, Z\}^{\otimes m}$ with equal probabilities $1/M^2$ constitutes a unitary 1-design (Lemma~\ref{lemma:Pauli sum}(i)).

\section{State-Averaged QPE}\label{sec:state averaged QPE}

\subsection{QPE Measure and the State-Averaged Condition}

The output distribution~\eqref{eq:QPE distribution main} of QPE depends not only on the eigenphases $\Theta$ but also on the overlaps $c \in S_{\mathbb{C}}^{M-1}$ of the initial state.
When the goal is simultaneous detection of all eigenvalues, this dependence on the initial state becomes the main difficulty.
In this section, we introduce a probabilistic approach to averaging this dependence.

To describe the situation where the initial state is randomly selected at each shot, we introduce the concept of a QPE measure (Definition~\ref{definition:QPE measure}, Appendix).
A QPE measure $\mu$ is a probability measure on $A_N \times S_{\mathbb{C}}^{M-1}$ representing the joint distribution of the QPE measurement outcome $a_j$ and the coefficient vector $z=(z_0,\dots,z_{M-1})$ of the initial state.

\begin{definition}[State-averaged]\label{def:state averaged main}
    A $\nu$-QPE measure $\mu$ is said to be state-averaged if for all $k = 0, 1, \dots, M-1$,
    \begin{align*}
        \int |z_k|^2 \, \nu(dz) = \frac{1}{M}.
    \end{align*}
    Each $z_k$ corresponds to the overlap of the initial state with the $k$th eigenstate.
\end{definition}

Under the state-averaged condition, the measurement probability of each $a_j \in A_N$ under QPE becomes
\begin{align}\label{eq:state averaged distribution}
    p_j = \frac{1}{M} \sum_{k=0}^{M-1} \tilde{F}_N(2\pi(\theta_k - a_j)).
\end{align}
The right-hand side is independent of the initial state and is determined solely by the eigenphases $\Theta$.
All eigenvalues contribute to the Fej\'{e}r kernel peaks with equal weight $1/M$, enabling detection by a uniform threshold.

\subsection{Realization of the State-Averaged Condition via 1-Designs}

The state-averaged condition can be easily realized using a 1-design.

For each element $P \in \{I, X\}^{\otimes m}$, select the initial state $P|0^m\rangle$ with equal probability $1/M$.
As stated above, for all $k = 0, 1, \dots, M-1$,
\begin{align*}
    \frac{1}{M} \sum_{P \in \{I,X\}^{\otimes m}} |\langle \psi_k | P | 0^m \rangle|^2 = \frac{1}{M}
\end{align*}
holds. This is precisely the state-averaged condition.

In practice, the implementation requires only choosing, for each shot, whether to apply a bit flip ($X$ operator) to each of the $m$ qubits independently with equal probability.
This is equivalent to randomly selecting one computational basis state $|j\rangle$ ($j = 0, 1, \dots, M-1$), and the additional quantum circuit cost is negligible.

\begin{remark}
    It is also possible to construct a state-averaged QPE measure using the Haar measure; however, initial state preparation based on the Haar measure is known to be inefficient~\cite{mele2024haar}.
    A 1-design reproduces only the first moment of the Haar measure and is the simplest construction; this suffices for the state-averaged condition.
\end{remark}

\section{Peak Detection and Shot Count Estimation}\label{sec:peak detection}

\subsection{Theory of Peak Detection}

We present the theoretical foundation for detecting all distinct eigenvalues under the state-averaged QPE measure.

For each eigenphase $\theta_k$, define its $1/N$-neighborhood 
\begin{align*}
    I_k = \left[ \theta_k - \frac{1}{N},\, \theta_k + \frac{1}{N} \right]_{\mathbb{T}},
\end{align*}
and define the detection set 
\begin{align*}
    C = \left\{ a_j \in A_N \,\middle|\, p_j \geq \frac{\tau}{M},~j = 0, 1, \dots, N-1 \right\}, \quad \tau = \frac{4}{\pi^2}.
\end{align*}

Theorem~\ref{theorem:peak detection} (Appendix) guarantees the following when the minimum gap between adjacent eigenphases is at least $3/N$:
\begin{itemize}
    \item[(1)] $C$ intersects every eigenphase neighborhood $I_k$ (no missed detection).
    \item[(2)] $C$ is contained in $\bigcup_k I_k$ (no false positives).
    \item[(3)] The number of points of $C$ belonging to the same $I_k$ is at most two.
\end{itemize}

These properties imply that eigenphases can be uniquely estimated from the grid points contained in $C$.

\subsection{Threshold Gap}

A positive gap between the measurement probabilities of $a_j \in C$ and $a_j \notin C$ is highly useful for the stability of peak detection.

Indeed, the analysis in the Appendix (Theorem~\ref{theorem:peak detection}\eqref{eq:gap between C and not C}) shows that for $a_j \notin \bigcup_k I_k$, we have $p_j \leq \sigma/M+d_N$, where
\begin{align*}
    \sigma = \frac{2}{\pi^2} \sum_{l=0}^{\infty} \frac{1}{(3l+1)^2}, \quad d_N=\frac{1-\tau}{N^2}
\end{align*}
and $\sigma/M+d_N < \tau/M$.
That is, there exists a gap $(\tau - \sigma)/M-d_N> 0$ between the measurement frequencies of $a_j$ near peaks and $a_j$ far from peaks.

\subsection{Shot Count Estimate}

Theorem~\ref{theorem:shot estimate} (Appendix) uses this gap to provide an estimate of the shot count sufficient for the threshold test on the empirical distribution $\hat{p}_j$ to function correctly for all $a_j$.

Let $\varepsilon>0$ satisfy
    \begin{align*}
        0 < \varepsilon \leq \frac{\tau - \sigma}{2} - \frac{Md_N}{2},
    \end{align*}
and $0 < \delta < 1$.
When the shot count $K$ satisfies
\begin{align*}
    K
    &\geq \left\{H^{+}\left(\frac{\gamma}{M}+d_N, \frac{\varepsilon}{M}\right)\right\}^{-1}\log\frac{N+M}{\delta} \\[.5em]
    & \sim \left\{(\gamma+\varepsilon)\log\frac{\gamma+\varepsilon}{\gamma} - \varepsilon \right\}^{-1}\,M\log\frac{N+M}{\delta} \quad (M \to \infty),
\end{align*}
all peaks are correctly detected with probability at least $1 - \delta$.
Here,
\begin{align*}
    H^+(x, a) &= (x+a)\log\frac{x+a}{x} + (1-x-a)\log\frac{1-x-a}{1-x}
\end{align*}
is the Kullback--Leibler divergence between Bernoulli distributions with parameters $x+a$ and $x$, and
\begin{align*}
    \gamma &= 1 + \frac{1}{\pi^2}\left( \frac{\pi^2}{6} - \sum_{l=0}^\infty \frac{1}{(3l+1)^2} \right).
\end{align*}

This estimate shows that the required number of shots is $O(M\log(N+M))$.

\section{Numerical Experiments}\label{sec:numerical experiments}

\subsection{Problem Setup}\label{subsec:problem setup}

We consider an eigenvalue problem arising from CAE.
Specifically, we address a structural natural frequency analysis and consider the eigenvalue problem discretized by the finite element method (FEM).

\subsubsection{Analytical Model}

The target structure is a cantilever beam.
The beam dimensions are $1000\,\mathrm{mm} \times 200\,\mathrm{mm} \times 100\,\mathrm{mm}$ (length $\times$ width $\times$ height), with material constants: Young's modulus $E = 205{,}000\,\mathrm{MPa}$, Poisson's ratio $\nu = 0.3$, and density $\rho = 7.85 \times 10^{-9}\,\mathrm{kg/mm^3}$.
These correspond to typical structural steel.

The discretization uses first-order hexahedral elements with a $16 \times 6 \times 2$ mesh in the length, width, and height directions, respectively.
As a boundary condition, all degrees of freedom on the face at $x = 0$ are constrained (fixed end).
The total number of degrees of freedom is $1{,}071$, and the number after applying boundary conditions is $m_0 = 1{,}008$.

A lumped mass matrix is adopted for the mass matrix.
The lumped mass matrix $\boldsymbol{\mathrm{M}}$ is a positive definite diagonal matrix, and the stiffness matrix $\boldsymbol{\mathrm{K}}$ is a positive definite real symmetric matrix.

\begin{figure}[ht]
    \centering
    \includegraphics[width=\linewidth]{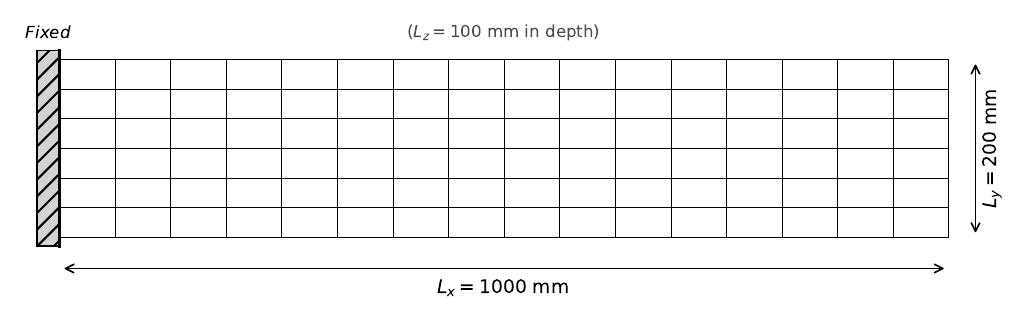}
    \caption{Finite element mesh of the cantilever beam ($16 \times 6 \times 2$ hexahedral elements, $m_0 = 1{,}008$ DOFs after boundary conditions).}
    \label{fig:fem_model}
\end{figure}

\subsubsection{Formulation of the Eigenvalue Problem}

Using $\boldsymbol{\mathrm{M}}$ and $\boldsymbol{\mathrm{K}}$, consider the generalized eigenvalue problem
\begin{align}
    \boldsymbol{\mathrm{K}}\boldsymbol{\mathrm{v}} = \lambda \boldsymbol{\mathrm{M}}\boldsymbol{\mathrm{v}}, \label{eq:GEVP1}
\end{align}
where $\lambda$ is an eigenvalue and $\boldsymbol{\mathrm{v}}$ is the corresponding eigenvector.
By substituting $\boldsymbol{\mathrm{u}} = \boldsymbol{\mathrm{M}}^{1/2}\boldsymbol{\mathrm{v}}$ in~\eqref{eq:GEVP1}, the problem reduces to the standard eigenvalue problem
\begin{align*}
    \boldsymbol{\mathrm{A}}\boldsymbol{\mathrm{u}} = \lambda \boldsymbol{\mathrm{u}},
\end{align*}
where $\boldsymbol{\mathrm{A}} = \boldsymbol{\mathrm{M}}^{-1/2}\boldsymbol{\mathrm{K}}\boldsymbol{\mathrm{M}}^{-1/2}$ is a positive definite matrix.

Let $\boldsymbol{\mathrm{A}} = (a_{kj})_{k,j=0}^{m_0-1}$, and extend it by appending zero entries to form an $M \times M$ ($M = 2^m$) matrix $A$:
\begin{align*}
    A
    &=
    \left(
        \begin{array}{c:c}
        {
            \begin{array}{ccc}
            a_{00} & \cdots & a_{0,m_0-1}  \\
                \vdots & \ddots & \vdots \\
                a_{m_0-1,0} & \cdots & a_{m_0-1,m_0-1}
            \end{array}
        } & 0 \\
            \hdashline
        0 & 
            \begin{array}{ccc}
                0 &  & \\
                & \ddots & \\
                & & 0
            \end{array}
        \end{array}
    \right).
\end{align*}
The matrix $A$ is positive semidefinite and real symmetric, with $\mathrm{ker}\,A = \mathrm{span}\{|m_0\rangle, \dots, |M-1\rangle\}$.

The eigendecomposition of $A$ is
\begin{align*}
    A = \sum_{k=0}^{m_0-1} \lambda_k |\psi_k\rangle\langle\psi_k| + \sum_{k=m_0}^{M-1} 0 \cdot |k\rangle\langle k|.
\end{align*}
Choosing $\alpha > 0$ such that $\|A/\alpha\| < 1$ and replacing $A$ by $A/\alpha$, we may assume $0 < \lambda_k < 1$ for $k = 0, 1, \dots, m_0-1$.
For $k = m_0, \dots, M-1$, set $\lambda_k = 0$ and $|\psi_k\rangle = |k\rangle$.

\subsection{Block Encoding}

For the matrix $A$ constructed in Section~\ref{subsec:problem setup}, there exists a unitary $U$ that provides a $(1,1,0)$-block encoding of $A$~\cite{GSLW19}.
In this numerical experiment, following~\cite{martyn2021grand}, we define $U$ using the eigendecomposition of $A$ as:
\begin{align*}
    U
    &= \sum_{k=0}^{M-1} 
    \begin{pmatrix}
        \lambda_k & \sqrt{1-\lambda_k^2} \\[.5em]
        \sqrt{1-\lambda_k^2} & -\lambda_k
    \end{pmatrix}
    \otimes |\psi_k\rangle\langle\psi_k| \\[.5em]
    &=
    \begin{pmatrix}
        A & \sqrt{I-A^2} \\[.5em]
        \sqrt{I-A^2} & -A
    \end{pmatrix}.
\end{align*}

In general, the off-diagonal blocks of a unitary satisfying $(\langle 0| \otimes I)\,U\,(|0\rangle \otimes I) = A$ are not unique.
However, the measurement distribution of QPE depends only on the weights $|c_k|^2$ of the initial state with respect to each eigenspace and not on the specific construction of the block encoding.
Therefore, the results of the following numerical experiments do not depend on the particular form~\eqref{eq:def_U}.

We note that the numerical experiments in this paper are based on the mathematical analysis of the QPE output distribution and do not address the quantum circuit implementation of $U$.

\subsection{QPE Simulation}\label{subsec:qpe simulation}

For $k = 0, 1, \dots, M-1$, let $\theta_k \in [0, \pi/2]$ satisfy $\lambda_k = \cos\theta_k$.
Then
\begin{align*}
    U(Z \otimes I)
    &= \sum_{k=0}^{M-1} e^{-i\theta_k Y} \otimes |\psi_k\rangle\langle \psi_k| \\
    &= \sum_{k=0}^{m_0-1} e^{-i\theta_k}|y_+\rangle\langle y_+| \otimes |\psi_k\rangle\langle \psi_k| + \sum_{k=0}^{m_0-1} e^{i\theta_k}|y_-\rangle\langle y_-| \otimes |\psi_k\rangle\langle \psi_k| \\[.5em]
    & \hspace{2em} + \sum_{k=m_0}^{M-1} e^{-i\frac{\pi}{2}}|y_+\rangle\langle y_+| \otimes |k\rangle\langle k| + \sum_{k=m_0}^{M-1} e^{i\frac{\pi}{2}}|y_-\rangle\langle y_-| \otimes |k\rangle\langle k|,
\end{align*}
where
\begin{align*}
    |y_\pm\rangle = \frac{|0\rangle \pm i \,|1\rangle}{\sqrt{2}}.
\end{align*}

The eigenvalues of $\tilde{U} := U(Z \otimes I)$ are $e^{\pm i\theta_k}$, with corresponding eigenstates $|y_\mp\rangle \otimes |\psi_k\rangle$.
By choosing the ancilla initial state as $|y_-\rangle$, only the phases $+\theta_k$ (corresponding to eigenvalues $e^{i\theta_k}$) are measured, while the phases $-\theta_k$ are not.

The target phases in the numerical experiment are the positive eigenphases of the original operator $\boldsymbol{\mathrm{A}}$.
As explained above, these phases satisfy $\theta_k \in (0,\pi/2)$ and hence, after normalization, lie in $(0,1/4)$, so their measurement bit strings begin with $0.00$.
By contrast, the phases introduced by zero padding correspond to $\pm\pi/2$, namely to the bit strings $0.01$ and $0.11$, while the negative phases lie in $(3/4,1)$.
Therefore, the target phases are separated from both the padding-induced phases and the negative phases at the measurement level.
Under this separation, if the original operator $\boldsymbol{\mathrm{A}}$ has no repeated eigenvalues, then the distinct-phase assumption used in the theoretical analysis is valid for the target phases considered here.

The initial state is prepared by selecting from $\{|y_-\rangle \otimes |j\rangle \mid j = 0, 1, \dots, m_0-1\}$ with equal probability $1/m_0$.
QPE then estimates the phase $\theta_k/(2\pi)$.
Since $\theta_k \in (0, \pi/2)$ implies $\theta_k/(2\pi) \in (0, 1/4)$, starting the phase kickback from $\tilde{U}^{2^2}$ yields $\vartheta_k := 4 \cdot \theta_k/(2\pi) \in (0, 1)$ as the measured quantity.

Let $n$ be the number of ancilla qubits for QPE measurement and set $N = 2^n$.
By~\cite{lim2024curvefitted} and Proposition~\ref{proposition:QPE distribution}, the probability of measuring $a_j$ in a single QPE execution with initial state $|j_0\rangle$ is
\begin{align*}
    p_{j\mid j_0}
    &= \sum_{k=0}^{m_0-1} |\langle j_0| \psi_k \rangle|^2 \, \tilde{F}_N(2\pi(\vartheta_k - a_j))
\end{align*}
for $j = 0, 1, \dots, N-1$.

\subsection{Detection Procedure}\label{subsec:detection procedure}

In the numerical experiment, the initial state is sampled uniformly from
$\{|y_-\rangle \otimes |j\rangle \mid j=0,1,\dots,m_0-1\}$.
Hence, the empirical distribution is normalized by $m_0$, and the detection threshold is scaled by $1/m_0$ rather than $1/M$.

For reporting the detection performance, let $r$ denote the number of target eigenphases considered in the experiment.
In the present cantilever example, we take $r=m_0=1{,}008$.

The set $C$ in Theorem~\ref{theorem:peak detection} is defined by the true probabilities $p_j$ and therefore cannot be directly constructed in experiments.
Instead, setting 
\begin{align*}
    \varepsilon &= \frac{\tau-\sigma}{2} - \frac{m_0d_N}{2},
\end{align*}
we construct the estimated peak set from the empirical distribution $\hat{p}_j$ using the threshold 
\begin{align*}
    \frac{\tau}{m_0} - \frac{\varepsilon}{m_0} = \frac{\tau+\sigma}{2m_0} + \frac{d_N}{2}.
\end{align*}
Thus, we define
\begin{align*}
    \hat{C} = \left\{a_j \in A_N \,\middle|\, \hat{p}_j \geq \frac{\tau+\sigma}{2m_0} + \frac{d_N}{2},~j = 0, 1, \dots, N-1\right\}.
\end{align*}

Applying Theorem~\ref{theorem:shot estimate} to the target phases considered here, with $M$ replaced by $r$ (here $r=m_0$), that is, to the output distribution with equal weights $1/m_0$ on the subspace spanned by the target eigenstates $\mathrm{span}\{|y_-\rangle \otimes |j\rangle \mid j=0,1,\dots,m_0-1\}$ (see also Remark \ref{remark:zero-padding case}), we obtain that, with sufficiently many shots, $\hat{C}$ satisfies with high probability
\begin{align*}
    C \subset \hat{C} \subset \bigcup_{k=0}^{r-1} I_k,
\end{align*}
and properties (1)--(4) of Theorem~\ref{theorem:peak detection from samples} hold.
That is, $\hat{C}$ contains all peaks near eigenphases without omission and does not contain any points far from eigenphases.
Note that, by Theorem~\ref{theorem:peak detection from samples}, any consecutive run of grid points contained in $\hat{C}$ has length at most three.
Accordingly, the procedure for obtaining eigenphase estimates from $\hat{C}$ is as follows.
\begin{description}
    \item[Estimation~1] If $a_{j-1},a_j,a_{j+1}\in\hat{C}$, set $\hat{\vartheta}=a_j$.
    If $a_j \in \hat{C}$, $a_{j+1} \in \hat{C}$, and $a_{j-1},a_{j+2}\notin\hat{C}$, the weighted interpolation
    \begin{align*}
        \hat{\vartheta} = \frac{\hat{p}_j \cdot a_j + \hat{p}_{j+1} \cdot a_{j+1}}{\hat{p}_j + \hat{p}_{j+1}}
    \end{align*}
    is taken as the eigenphase estimate.
    \item[Estimation~2] If $a_j \in \hat{C}$ and $a_{j-1}, a_{j+1} \notin \hat{C}$, set $\hat{\vartheta} = a_j$ as the estimate.
\end{description}

From the estimated phase $\hat{\vartheta}$, the natural frequency is recovered as
\begin{align*}
    \hat{f} = \frac{\sqrt{\alpha\cos(2\pi\hat{\vartheta}/4)}}{2\pi},
\end{align*}
where $\alpha>0$ is the scaling factor introduced above.

Let $\hat{r}$ denote the number of eigenphase estimates obtained from $\hat{C}$ via Estimation~1 and Estimation~2.
We say that $\hat{r}$ target eigenphases are detected.
When $\hat{r} < r$, we say that $r-\hat{r}$ target eigenphases are undetected.
When $\hat{r} = r$, we say that all target eigenphases are detected.
The ratio $\hat{r}/r$ is called the detection rate.

\subsection{Experimental Design}\label{subsec:experimental design}

The numerical experiments consist of the following two experiments to validate the theoretical claims.

\subsubsection{Experiment~1: Simultaneous Detection of All Eigenvalues}

We verify that all eigenvalues of the cantilever beam model are detected with the shot count based on the estimate from Theorem~\ref{theorem:shot estimate}.
The parameters are as follows.
\begin{itemize}
    \item Degrees of freedom: $m_0 = 1{,}008$
    \item Number of ancilla qubits: $n = 27$ ($N = 134{,}217{,}728$)
    \item Confidence parameter: $\delta = 0.001$
    \item Shot count: $K = 7{,}060{,}000$ ($\geq 7{,}052{,}323$ by Theorem~\ref{theorem:shot estimate})
\end{itemize}
The value $n = 27$ was chosen as the minimum $n$ satisfying $3/N < \min_k |\vartheta_{k+1} - \vartheta_k|_{\mathbb{T}}$ for the minimum adjacent eigenphase gap $\min_k |\vartheta_{k+1} - \vartheta_k|_{\mathbb{T}} = 3.58 \times 10^{-8}$.

The experiment is run at five shot count levels: $0.25$, $0.5$, $0.75$, $1.0$, and $1.5$ times the theoretical estimate.

\subsubsection{Experiment~2: Search for the Critical Shot Count}

To quantitatively evaluate the conservativeness of the theoretical shot count estimate, we vary the shot count from $0.005$ to $0.1$ times the baseline shot count $K$ and identify the critical point at which the detection rate drops below 100\%.
For each shot count, three trials with different random seeds are performed to confirm the stability of the detection rate.

\subsection{Results}\label{subsec:results}

\subsubsection{Results of Experiment~1}

Table~\ref{tab:experiment1} shows the detection results at each shot count level.
All $1{,}008$ target eigenphases were detected at $100\%$ across all levels from $0.25K$ to $1.5K$.

\begin{table}[ht]
\centering
\caption{Experiment~1: Shot count and detection results ($m_0 = 1{,}008$, $n = 27$)}
\label{tab:experiment1}
\begin{tabular}{crcccc}
\hline
Fraction & Shots & Detected & Detection Rate & Phase RMSE & Freq.\ Rel(Max) \\
\hline
$0.25K$ & $1{,}765{,}000$  & 1{,}008/1{,}008 & 100\% & $1.78 \times 10^{-9}$ & $9.26 \times 10^{-5}$ \\
$0.50K$ & $3{,}530{,}000$  & 1{,}008/1{,}008 & 100\% & $1.76 \times 10^{-9}$ & $8.02 \times 10^{-5}$ \\
$0.75K$ & $5{,}295{,}000$  & 1{,}008/1{,}008 & 100\% & $1.79 \times 10^{-9}$ & $7.05 \times 10^{-5}$ \\
$1.0K$  & $7{,}060{,}000$  & 1{,}008/1{,}008 & 100\% & $1.78 \times 10^{-9}$ & $1.07 \times 10^{-4}$ \\
$1.5K$  & $10{,}590{,}000$ & 1{,}008/1{,}008 & 100\% & $1.77 \times 10^{-9}$ & $6.69 \times 10^{-5}$ \\
\hline
\end{tabular}
\end{table}

The phase estimation RMSE remains in the narrow range from $1.76 \times 10^{-9}$ to $1.79 \times 10^{-9}$ across all shot count levels.
Likewise, the maximum relative error of the recovered natural frequency remains stable, staying between $6.69 \times 10^{-5}$ and $1.07 \times 10^{-4}$.

These results indicate that, for the present cantilever example, the simultaneous detection of all target eigenphases is already stable at $0.25K$.
Over the tested range, increasing the shot count yields no improvement in the detection rate and only minor variations in the final error metrics.

\subsubsection{Results of Experiment~2}

Table~\ref{tab:experiment2} shows the results of the critical shot count search.

\begin{table}[ht]
\centering
\caption{Experiment~2: Search for the critical shot count ($m_0 = 1{,}008$, $n = 27$, 3 trials each)}
\label{tab:experiment2}
\begin{tabular}{crcc}
\hline
Fraction & Shots & Detection Rate (Mean) & Detection Rate (Min) \\
\hline
$0.005K$ & $35{,}300$      & 99.07\% & 98.71\% \\
$0.01K$  & $70{,}600$      & 99.83\% & 99.70\% \\
$0.02K$  & $141{,}200$     & 100.0\% & 100.0\% \\
$0.03K$  & $211{,}800$     & 100.0\% & 100.0\% \\
$0.05K$  & $353{,}000$     & 100.0\% & 100.0\% \\
$0.10K$  & $706{,}000$     & 100.0\% & 100.0\% \\
\hline
\end{tabular}
\end{table}

The critical shot count at which all target eigenphases are stably detected is
\begin{align*}
    K_{\mathrm{critical}} \approx 141{,}200 = 0.02K,
\end{align*}
since from this level onward all three trials achieve a detection rate of $100\%$.

Thus, the practical critical shot count is approximately $1/50$ of the baseline shot count $K=7{,}060{,}000$, and is also very close to $1/50$ of the theoretical estimate from Theorem~\ref{theorem:shot estimate}.
This shows that the theoretical shot count estimate functions correctly as a sufficient condition, but is considerably conservative for the present cantilever example.

The transition is gradual rather than abrupt.
Even at $0.005K = 35{,}300$ shots, the detection rate remains between $98.7\%$ and $99.4\%$, corresponding to $995$ to $1{,}002$ detected target eigenphases out of $1{,}008$.
At $0.01K = 70{,}600$ shots, one of the three trials already achieves full detection, and the minimum detection rate is still as high as $99.7\%$.
Hence, the loss of complete detection in the low-shot regime is not a global failure of the method, but is caused by only a very small number of modes near the detection threshold.

\subsection{Discussion of Experiments}\label{subsec:discussion experiments}

The results of Experiments~1 and~2 confirm the following.
\begin{itemize}
    \item[(1)] Using standard QPE with 1-design randomized initial states, all $m_0 = 1{,}008$ eigenvalues were simultaneously detected.
    This demonstrates that all eigenvalues are accessible by random selection of computational basis states alone, without individually preparing each eigenstate.

    \item[(2)] Once detection succeeds, the phase estimation accuracy becomes nearly independent of the shot count.
    The residual phase error is governed primarily by the fixed ancilla resolution together with the interpolation procedure, rather than by finite-shot statistical fluctuations.

    \item[(3)] The theoretical shot count estimate (Theorem~\ref{theorem:shot estimate}) functions correctly as a sufficient condition.
    On the other hand, the experimentally observed critical shot count is approximately $1/50$ of the baseline shot count and also very close to $1/50$ of the theoretical estimate, indicating that the theory is conservative.
    Moreover, the degradation below the full-detection threshold is gradual rather than abrupt: even at $0.005K$, more than $98.7\%$ of the target eigenphases are still detected.
    This suggests that low-shot failures are caused only by a very small number of modes near the detection threshold, rather than by a global breakdown of the method.
    One possible explanation for this conservativeness is that, in the proof of Theorem~\ref{theorem:shot estimate}, the effective threshold gap is derived from the universal bounds in Theorem~\ref{theorem:peak detection}, namely the lower bound $\tau/M$ on peak bins and the upper bound $\sigma/M+d_N$ on non-peak bins, and this derived gap may be substantially smaller than the actual probability gap in the present example.
\end{itemize}

\section{Discussion and Conclusion}\label{sec:conclusion}

\subsection{Summary of Main Results}

In this paper, we have shown that standard QPE with 1-design randomized initial states makes all eigenvalues accessible at the level of the state-averaged QPE distribution.
This is achieved without any modification to the standard QPE circuit, simply by selecting the initial state randomly from the computational basis at each shot and analyzing the resulting output distribution and its sampled peak structure.

Specifically, we have constructed the following theoretical framework:
\begin{itemize}
    \item Definition of the state-averaged QPE measure and its realization via 1-designs (Section~\ref{sec:state averaged QPE}).
    \item A peak detection theory for distinct eigenphases based on properties of the Fej\'er kernel (Theorem~\ref{theorem:peak detection}).
    \item A sample-based peak detection result showing that the detected set $\hat{C}$ obtained from finitely many shots still identifies the true peaks in a rigorous sense (Theorem~\ref{theorem:peak detection from samples}).
    \item An estimate of the shot count sufficient for detection of all distinct eigenvalue peaks (Theorem~\ref{theorem:shot estimate}).
\end{itemize}

These theoretical results were validated through numerical experiments on an FEM matrix with $1{,}008$ degrees of freedom.
In the present example, all target eigenvalues were detected at 100\%, confirming that the proposed sample-based peak-detection procedure works effectively in practice and that the theoretical shot-count estimate functions as a sufficient condition.

\subsection{Trade-off between Shot Cost and Initial State Preparation}

In many existing applications of QPE, eigenvalues can be estimated from a small number of shots by preparing an initial state with sufficient overlap to the target eigenstate.
However, this initial state preparation is itself a generally difficult quantum task.

In our approach, the difficulty of initial state preparation is eliminated, at the cost of requiring $O(M\log(N+M))$ shots.
Under the state-averaged condition, the peak height near each eigenphase is approximately $1/M$, and this order of shots appears necessary to detect all peaks.
This can be viewed as a structure in which the quantum cost of initial state preparation is transferred to a classical cost in shot count.

Since the matrix dimension satisfies $M = 2^m$ for the number of system qubits $m$, the shot count $O(M\log(N+M))$ grows exponentially in $m$.
However, in typical CAE/FEM applications, $M$ is chosen as the smallest power of two exceeding the physical degrees of freedom $m_0$, and $M$ and $m_0$ are of the same order (in our experiment, $m_0 = 1{,}008$ and $M = 1{,}024$).
The scalability of this approach is therefore governed by the physical problem size $m_0$ rather than directly by the number of qubits, though the exponential dependence on $m$ imposes a practical upper limit on the tractable problem size.

Furthermore, even when only eigenvalues in a specific frequency band are needed, shots corresponding to all eigenphases are required, and shots outside the target band are discarded in classical post-processing.
Improving this shot efficiency is an important open problem.

\subsection{Possible Applications}

Beyond its use for eigenvalue detection itself, state-averaged QPE may also be useful for obtaining prior information on eigenvalue locations for other quantum algorithms.
One possible application is as a preprocessing step for HHL.
If state-averaged QPE is performed beforehand and the phase-register values corresponding to genuine eigenphase peaks are identified, then one can restrict the controlled rotations in HHL to those detected phase-register values, rather than assigning them to all phase-register values.
This reduces the number of controlled-rotation cases to be treated explicitly and may also suppress unnecessary rotations associated with values not corresponding to genuine eigenphase peaks.

\subsection{Future Directions}\label{sec:future directions}

We identify the following directions for future work.

First, improvement of the shot count estimate.
The experimentally observed critical shot count is approximately $1/50$ of the theoretical estimate, indicating that the sufficient condition of Theorem~\ref{theorem:shot estimate} is conservative.
A tighter estimate is desirable.

Second, improvement of shot efficiency for band-selective detection.
In our approach, all shots are used to sample the full state-averaged QPE distribution even when only eigenvalues in a specific band are needed.
A combination with band filtering and amplitude amplification via QSVT is conceivable; however, whether this is compatible with the state-averaging procedure is not clear at present and requires further investigation.

Third, extension to repeated eigenphases.
Although repeated eigenphases remain accessible in the state-averaged QPE distribution through the aggregated weight of their eigenspaces, the present rigorous peak-detection theory is stated for distinct eigenphases, and extending it to the repeated-eigenphase case is nontrivial.
Under the state-averaged condition, multiplicities amplify both the peak and off-peak contributions in the QPE distribution.
At the same time, the inverse-square sidelobe decay of the Fej\'er kernel suggests that a local multiplicity-aware extension may still be possible under an appropriate separation condition.
We leave this problem for future work.

\startappendix
\section*{Appendix: Proofs and Technical Details}
In this appendix, we provide detailed proofs and technical supplements for the claims stated in the main text.
Since the peak-detection method developed in this paper is based on properties of the Fej\'er kernel, we also present several facts used in the peak-detection procedure that were not explained in detail in the main text.

Let $U \in U(M)$ have the eigendecomposition
\begin{align*}
    U = \sum_{k=0}^{M-1} e^{i2\pi\theta_k} |\psi_k\rangle\langle\psi_k|
\end{align*}
with eigenphases $\Theta$.
For $c = (c_0, \dots, c_{M-1}) \in S_{\mathbb{C}}^{M-1}$, define
\begin{align*}
    |\psi\rangle
    &= \sum_{k=0}^{M-1} c_k |\psi_k\rangle.
\end{align*}
The initial state including the ancilla register for QPE is $|0^n\rangle \otimes |\psi\rangle$.
The following is known regarding the state and measurement probabilities after QPE~\cite{lim2024curvefitted}.

\begin{proposition}[QPE Distribution]\label{proposition:QPE distribution}
    Let $U_{\textup{QPE}}$ denote the unitary operation implementing QPE on $|0^n\rangle \otimes |\psi\rangle$, and set $|\phi\rangle = U_{\textup{QPE}}\,(|0^n\rangle \otimes |\psi\rangle)$.
    Then
    \begin{align*}
        |\phi\rangle
        &= \frac{1}{N} \sum_{k=0}^{M-1}\sum_{j=0}^{N-1} c_k\, \frac{\sin \pi N \left( \theta_k - a_j\right)}{\sin \pi \left( \theta_k - a_j \right)}\, e^{i\pi (N-1) \left( \theta_k - a_j \right)}\, |j\rangle \otimes |\psi_k\rangle.
    \end{align*}
    The probability of measuring outcome $j = 0, 1, \dots, N-1$ is
    \begin{align*}
        \langle \phi|H_j|\phi\rangle
        &= \frac{1}{N^2} \sum_{k=0}^{M-1} |c_k|^2 \, \frac{\sin^2 \pi N \left( \theta_k - a_j \right)}{\sin^2 \pi \left( \theta_k - a_j \right)},
    \end{align*}
    where $H_j = |j\rangle\langle j| \otimes I$.
\end{proposition}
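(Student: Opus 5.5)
The plan is to compute the QPE state directly by following the standard circuit stage by stage, using linearity in the eigen-expansion $|\psi\rangle=\sum_k c_k|\psi_k\rangle$ so that it suffices to treat a single eigenstate $|\psi_k\rangle$ and then superpose. First, applying $H^{\otimes n}$ to $|0^n\rangle$ gives $N^{-1/2}\sum_{t=0}^{N-1}|t\rangle$. Next, the controlled powers $U^{2^{r}}$ act as $\sum_t |t\rangle\langle t|\otimes U^t$. Since $U^t|\psi_k\rangle=e^{i2\pi t\theta_k}|\psi_k\rangle$, the state becomes
\begin{align*}
N^{-1/2}\sum_k c_k\sum_{t=0}^{N-1} e^{i2\pi t\theta_k}|t\rangle\otimes|\psi_k\rangle .
\end{align*}
Then I will apply the inverse QFT $|t\rangle\mapsto N^{-1/2}\sum_j e^{-i2\pi tj/N}|j\rangle$, which yields the amplitude
\begin{align*}
\frac1N\sum_{t=0}^{N-1} e^{i2\pi t(\theta_k-a_j)}
\end{align*}
on $|j\rangle\otimes|\psi_k\rangle$.

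The only real computation is evaluating this geometric sum. Set $x=\theta_k-a_j$. When $x\notin\mathbb{Z}$,
\begin{align*}
\sum_{t=0}^{N-1}e^{i2\pi tx}=\frac{e^{i2\pi Nx}-1}{e^{i2\pi x}-1}.
\end{align*}
Factoring $e^{i\pi Nx}$ from the numerator and $e^{i\pi x}$ from the denominator turns each into $2i\sin(\cdot)$. This gives $\frac{\sin\pi Nx}{\sin\pi x}e^{i\pi(N-1)x}$, which is exactly the stated amplitude. When $x\in\mathbb{Z}$ the sum equals $N$, which I will interpret as the continuous extension of $\sin\pi Nx/\sin\pi x$ (its limit has modulus $N$, with the phase $e^{i\pi(N-1)x}$ supplying the sign $\pm1$ consistently). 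This matches the convention $F_N=N$ at multiples of $2\pi$ used in the definition of the Fejér kernel. I expect this degenerate case and the bookkeeping of phases (the ordering and sign convention of the inverse QFT) to be the main, if modest, obstacle. I will fix the QFT convention so that outcome $j$ corresponds to $a_j=j/N$, as in the paper.

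For the probability, I will write $\langle\phi|H_j|\phi\rangle=\|(\langle j|\otimes I)|\phi\rangle\|^2$. The projected vector is $\sum_k \beta_{jk}|\psi_k\rangle$ with $\beta_{jk}$ the amplitude above. Because $\{|\psi_k\rangle\}$ is orthonormal (as $U$ is unitary and the eigendecomposition is taken orthonormal), the cross terms vanish and the squared norm is $\sum_k|\beta_{jk}|^2$. The phase factor has modulus one, so
\begin{align*}
\sum_k|\beta_{jk}|^2=\frac{1}{N^2}\sum_k|c_k|^2\frac{\sin^2\pi N(\theta_k-a_j)}{\sin^2\pi(\theta_k-a_j)},
\end{align*}
which is the claim. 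Degenerate eigenphases cause no trouble, since orthonormality is all that is used.
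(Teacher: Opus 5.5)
Your proposal is correct: it is the standard computation (Hadamard layer, controlled powers acting as $\sum_t |t\rangle\langle t|\otimes U^t$, inverse QFT, the geometric sum $\sum_{t=0}^{N-1}e^{i2\pi t x}=e^{i\pi(N-1)x}\sin(\pi N x)/\sin(\pi x)$, and orthonormality of the eigenbasis to eliminate cross terms). The paper gives no proof of its own here---it states the proposition as known and cites \cite{lim2024curvefitted}---so your derivation supplies what that citation stands in for. Your handling of the removable singularity is also sound, and it simplifies: since $\theta_k,a_j\in[0,1)$, the only integer value of $\theta_k-a_j$ is $0$, where the amplitude is just $c_k$, consistent with the convention $F_N(0)=N$.
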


This leads to the following definition.

\begin{definition}\label{definition:QPE measure}
    A probability measure $\mu$ on $A_N \times S_{\mathbb{C}}^{M-1}$ is called a QPE measure with distribution $\nu$ on $S_{\mathbb{C}}^{M-1}$ (or simply a $\nu$-QPE measure, or QPE measure) if
    \begin{align*}
        \mu(\{a\} \times B)
        &= \frac{1}{N^2}\int_B \nu(dz)\sum_{k=0}^{M-1}|z_k|^2 \, \frac{\sin^2 \pi N \left( \theta_k - a \right)}{\sin^2 \pi \left( \theta_k - a \right)}, \quad a \in A_N,~B \in \mathcal{B}(S_{\mathbb{C}}^{M-1}),
    \end{align*}
    where $z = (z_0, \dots, z_{M-1}) \in S_{\mathbb{C}}^{M-1}$.

    Furthermore, $\mu$ is said to be state-averaged if for all $k = 0, 1, \dots, M-1$,
    \begin{align*}
        \int_{S_{\mathbb{C}}^{M-1}} |z_k|^2 \, \nu(dz) = \frac{1}{M}.
    \end{align*}
\end{definition}

We restrict attention hereafter to the case where the eigenphases of $U$ are distinct, i.e.,
\begin{align*}
    0 \leq \theta_0 < \theta_1 < \cdots < \theta_{M-1} < 1.
\end{align*}
The case of repeated eigenphases is beyond the scope of the present paper and is discussed as a future direction in Section~\ref{sec:future directions}.

\subsection*{Examples of State-Averaged QPE Measures}
We present practical examples of state-averaged QPE measures for initial state preparation.
\begin{lemma}\label{lemma:Pauli sum}
    Let $\mathcal{P}_{m} = \{I, X, Y, Z\}^{\otimes m}$ and $\mathcal{X}_{m} = \{I, X\}^{\otimes m}$.
    Then the following (i) and (ii) hold.
    \begin{itemize}
        \item[(i)] For any $|x\rangle, |y\rangle \in \mathbb{C}^M$ with $\langle x|x\rangle = \langle y|y\rangle = 1$,
            \begin{align*}
                \frac{1}{M^2}\sum_{P \in \mathcal{P}_{m}} |\langle x|P|y\rangle|^2
                &= \frac{1}{M}.
            \end{align*}
        \item[(ii)] For any $|x\rangle \in \mathbb{C}^M$ with $\langle x|x\rangle = 1$,
            \begin{align*}
                \frac{1}{M}\sum_{P \in \mathcal{X}_{m}} |\langle x|P|0^{m}\rangle|^2
                &= \frac{1}{M}.
            \end{align*}
    \end{itemize}
\end{lemma}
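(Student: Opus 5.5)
For (ii), I will first identify the states $P|0^m\rangle$ for $P \in \mathcal{X}_m$. Writing $P = X^{b_1}\otimes\cdots\otimes X^{b_m}$ with $b \in \{0,1\}^m$, we have $P|0^m\rangle = |b_1\cdots b_m\rangle$. The map $b \mapsto P$ is a bijection onto $\mathcal{X}_m$, so as $P$ ranges over $\mathcal{X}_m$, the states $P|0^m\rangle$ run exactly once through the computational basis $\{|j\rangle\}_{j=0}^{M-1}$. Hence
\begin{align*}
    \sum_{P\in\mathcal{X}_m}|\langle x|P|0^m\rangle|^2 = \sum_{j=0}^{M-1}|\langle x|j\rangle|^2 = \langle x|x\rangle = 1,
\end{align*}
by completeness of the orthonormal basis. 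Dividing by $M$ gives the claim.

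For (i), I will use that $\mathcal{P}_m$ is, up to normalization, an orthogonal basis of $M\times M$ matrices under the Hilbert--Schmidt inner product: $\operatorname{tr}(P^\dagger Q) = M\delta_{PQ}$. This follows factorwise from the single-qubit relations $\operatorname{tr}(\sigma_a\sigma_b)=2\delta_{ab}$ together with multiplicativity of the trace over tensor products, and there are $4^m = M^2$ such matrices. Rewrite
\begin{align*}
    |\langle x|P|y\rangle|^2 = |\operatorname{tr}(P\,|y\rangle\langle x|)|^2 = |\langle P^\dagger, |y\rangle\langle x|\rangle_{\mathrm{HS}}|^2,
\end{align*}
and apply Parseval for the orthonormal basis $\{P/\sqrt{M}\}$ (using that $P^\dagger = P$ also ranges over $\mathcal{P}_m$):
\begin{align*}
    \sum_P|\langle x|P|y\rangle|^2 = M\,\bigl\| |y\rangle\langle x| \bigr\|_{\mathrm{HS}}^2 = M\,\langle x|x\rangle\langle y|y\rangle = M.
\end{align*}
Dividing by $M^2$ yields $1/M$.

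The only step needing care is verifying the trace-orthogonality and completeness of $\mathcal{P}_m$, i.e., that the count $M^2$ matches the dimension so that Parseval applies. An alternative route I would keep in reserve is the twirl identity $\frac{1}{M^2}\sum_P P A P^\dagger = \operatorname{tr}(A)\,I/M$, checked on matrix units qubit by qubit, applied with $A = |y\rangle\langle y|$ and then sandwiched with $\langle x|\cdot|x\rangle$.
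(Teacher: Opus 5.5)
Your proof is correct. Part (ii) is the paper's argument, with the bijection $b\mapsto X^{b_1}\otimes\cdots\otimes X^{b_m}$ onto the computational basis made explicit.

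For part (i) you rely on a different key fact. The paper quotes the Pauli twirl identity $\frac{1}{M^2}\sum_{P\in\mathcal{P}_m}PAP=\frac{\mathrm{tr}\,A}{M}I$ from the literature, takes $A=|y\rangle\langle y|$, and sandwiches with $\langle x|\cdot|x\rangle$. You instead read $\langle x|P|y\rangle$ as a Hilbert--Schmidt coefficient of $|y\rangle\langle x|$ and apply Parseval in the orthonormal basis $\{P/\sqrt{M}\}$. Both facts are two readings of the same completeness relation $\sum_{P}P_{ab}\overline{P_{cd}}=M\delta_{ac}\delta_{bd}$, so neither argument is more general; they differ only in what is taken as given.

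Your route is self-contained. It needs only $\mathrm{tr}(\sigma_a\sigma_b)=2\delta_{ab}$, multiplicativity of the trace over tensor factors, and the count $4^m=M^2=\dim\mathbb{C}^{M\times M}$ (orthogonality gives linear independence, hence a basis). It also handles two different vectors $x,y$ in one step via $\|\,|y\rangle\langle x|\,\|_{\mathrm{HS}}^2=\langle x|x\rangle\langle y|y\rangle$.

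The paper's route is shorter on the page. It also phrases the result directly as the unitary 1-design property of the Pauli group, which is the language of the main text. But it delegates to a citation exactly the computation your Parseval step carries out.
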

\begin{proof}
    (i) For $A \in \mathbb{C}^{M\times M}$, by the Pauli twirl identity~\cite[4.7.4]{wilde2017quantum},
    \begin{align*}
        \frac{1}{M^2}\sum_{P \in \mathcal{P}_{m}} P A P
        &= \frac{\mathrm{tr}\,A}{M}I.
    \end{align*}

    Setting $A = |y\rangle\langle y|$,
    \begin{align*}
        \frac{1}{M^2}\sum_{P \in \mathcal{P}_{m}} |\langle x|P|y\rangle|^2
        &= \frac{1}{M^2}\sum_{P \in \mathcal{P}_{m}} \langle x|P A P|x\rangle \\
        &= \frac{1}{M}.
    \end{align*}

    (ii) This follows from $\sum_{P \in \mathcal{X}_{m}} |\langle x|P|0^{m}\rangle|^2 = \sum_{j=0}^{M-1} |\langle x|j\rangle|^2 = 1$.
\end{proof}

Define measures $\tilde{\nu}_1$ and $\tilde{\nu}_2$ on the measurable spaces $(\mathcal{P}_{m}, 2^{\mathcal{P}_{m}})$ and $(\mathcal{X}_{m}, 2^{\mathcal{X}_{m}})$ by
\begin{align*}
    & \tilde{\nu}_1(\{P\}) = \frac{1}{M^2}, \quad P \in \mathcal{P}_{m}, \\[.5em]
    & \tilde{\nu}_2(\{P\}) = \frac{1}{M}, \quad P \in \mathcal{X}_{m},
\end{align*}
respectively.
Define the maps
\begin{align*}
    & \phi_1 \colon \mathcal{P}_{m} \ni P \mapsto (z_0, \dots, z_{M-1}) = (\langle \psi_0|P|0^{m}\rangle, \dots, \langle \psi_{M-1}|P|0^{m}\rangle) \in S_{\mathbb{C}}^{M-1}, \\
    & \phi_2 \colon \mathcal{X}_{m} \ni P \mapsto (z_0, \dots, z_{M-1}) = (\langle \psi_0|P|0^{m}\rangle, \dots, \langle \psi_{M-1}|P|0^{m}\rangle) \in S_{\mathbb{C}}^{M-1}.
\end{align*}
Defining the measures $\nu_1 = \tilde{\nu}_1 \circ \phi_1^{-1}$ and $\nu_2 = \tilde{\nu}_2 \circ \phi_2^{-1}$ on $S_{\mathbb{C}}^{M-1}$ as image measures, Lemma~\ref{lemma:Pauli sum} yields, for all $k = 0, 1, \dots, M-1$,
\begin{align*}
    \int_{S_{\mathbb{C}}^{M-1}} |z_k|^2 \, \nu_j(dz) = \frac{1}{M}, \quad j = 1, 2.
\end{align*}

Therefore, we obtain the following.
\begin{theorem}
    The $\nu_1$-QPE measure and the $\nu_2$-QPE measure are state-averaged.
\end{theorem}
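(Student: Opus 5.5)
The plan is to verify directly the two ingredients in Definition~\ref{definition:QPE measure}. First, each $\nu_j$ is a probability measure on $S_{\mathbb{C}}^{M-1}$. Second, it has uniform second moments $\int |z_k|^2\,\nu_j(dz)=1/M$.

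For well-definedness, note that for every Pauli string $P$ (in $\mathcal{P}_m$ or $\mathcal{X}_m$), $P$ is unitary. Hence $P|0^m\rangle$ is a unit vector. Since $\{|\psi_k\rangle\}$ is an orthonormal basis (as $U$ is unitary with the stated eigendecomposition), Parseval gives $\sum_k |\langle\psi_k|P|0^m\rangle|^2 = 1$. So $\phi_1,\phi_2$ indeed take values in $S_{\mathbb{C}}^{M-1}$. Both domains are finite and carry the power-set $\sigma$-algebra, so the maps are trivially measurable. The pushforwards $\nu_j=\tilde\nu_j\circ\phi_j^{-1}$ are then Borel probability measures, because $\tilde\nu_1,\tilde\nu_2$ have total mass $M^2\cdot M^{-2}=1$ and $M\cdot M^{-1}=1$. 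The corresponding $\nu_j$-QPE measures $\mu$ are then defined by the formula in Definition~\ref{definition:QPE measure}. To confirm that each $\mu$ is a probability measure, I would sum over $a\in A_N$ and use Lemma~\ref{lemma:Fejer properties}-(i), $\sum_j \tilde F_N(2\pi(\theta_k-a_j))=1$, together with $\sum_k|z_k|^2=1$. This confirms total mass $1$.

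For the state-averaged property, I would apply the change-of-variables formula for image measures:
\begin{align*}
\int_{S_{\mathbb{C}}^{M-1}} |z_k|^2\,\nu_1(dz) = \sum_{P\in\mathcal{P}_m}\frac{1}{M^2}\,|\langle\psi_k|P|0^m\rangle|^2 ,
\end{align*}
which equals $1/M$ by Lemma~\ref{lemma:Pauli sum}(i) with $|x\rangle=|\psi_k\rangle$ and $|y\rangle=|0^m\rangle$. Analogously, for $\nu_2$ the integral equals $\frac1M\sum_{P\in\mathcal{X}_m}|\langle\psi_k|P|0^m\rangle|^2=1/M$ by Lemma~\ref{lemma:Pauli sum}(ii). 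This holds for every $k$, which is exactly the state-averaged condition.

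There is no genuine obstacle here; the content is entirely carried by Lemma~\ref{lemma:Pauli sum}. The only point requiring care is the bookkeeping just described: checking that $\phi_j$ lands in the unit sphere, and that the integral against a pushforward of a finite discrete measure reduces to the finite weighted sum.
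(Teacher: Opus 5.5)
Your proposal is correct and matches the paper's argument. Like the paper, you push the uniform measures on $\mathcal{P}_m$ and $\mathcal{X}_m$ forward through $\phi_1,\phi_2$, reduce $\int |z_k|^2\,\nu_j(dz)$ to finite weighted sums, and conclude from Lemma~\ref{lemma:Pauli sum}(i) and (ii). Your extra checks are bookkeeping the paper leaves implicit, and they are sound: that $\phi_j$ takes values in $S_{\mathbb{C}}^{M-1}$, and that the resulting $\mu$ has total mass one via Lemma~\ref{lemma:Fejer properties}(i).
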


\begin{remark}\label{remark:zero-padding case}
    In Section~\ref{subsec:qpe simulation}, zero padding yields $\ker A = \mathrm{span}\{|j\rangle \mid j = m_0, \dots, M-1\}$, which is orthogonal to $\mathrm{span}\{|j\rangle \mid j = 0, 1, \dots, m_0-1\}$.
    Moreover, the eigenvectors of the original problem are orthogonal to $\ker A$, so they are contained in $\mathrm{span}\{|j\rangle \mid j = 0, 1, \dots, m_0-1\}$.
    Therefore, in such cases, a state 1-design can be constructed using only $\{|j\rangle \mid j = 0, 1, \dots, m_0-1\}$, so the relevant averaging is over a $m_0$-dimensional subspace and yields equal weights $1/m_0$.
\end{remark}

\subsection*{The Fej\'er Kernel and its Properties}

For $n = 2, 3, \dots$, define $F_n \colon \mathbb{R} \to \mathbb{R}$ by
\begin{align*}
    F_n(x)
    =
    \begin{cases}
        n, & x = 0, \pm 2\pi, \pm 4\pi, \dots, \\[.5em]
        \displaystyle\frac{1}{n}\frac{\sin^2(nx/2)}{\sin^2(x/2)}, & \text{otherwise}.
    \end{cases}
\end{align*}
The function $F_n$ is called the Fej\'er kernel.

\begin{figure}[ht]
    \centering
    \begin{tikzpicture}
        \begin{axis}[
            width=13cm,
            height=8.5cm,
            xlabel={$x$},
            ylabel={$F_n(x)$},
            domain=-pi:pi,
            samples=801,
            smooth,
            thick,
            axis lines=middle,
            grid=none,
            xtick={-3.14159,-2.35619,-1.5708,-0.78540,0,0.78540,1.5708,2.35619,3.14159},
            xticklabels={$-\pi$,$-\frac{3\pi}{4}$,$-\frac{\pi}{2}$,$-\frac{\pi}{4}$,$0$,$\frac{\pi}{4}$,$\frac{\pi}{2}$,$\frac{3\pi}{4}$,$\pi$},
            ytick={8},
            yticklabels={$n$},
            ymin=-0.5,
            ymax=9,
            xmin=-3.5,
            xmax=3.5,
            tick label style={font=\small},
            label style={font=\normalsize},
        ]
        \addplot[blue!80!black, line width=1.2pt] {
            abs(x) < 0.001 ? 8 : (1.0/8.0)*(sin(deg(8*x/2))/sin(deg(x/2)))^2
        };
        \end{axis}
    \end{tikzpicture}
    \caption{Fej\'er Kernel ($n = 8$)}
\end{figure}
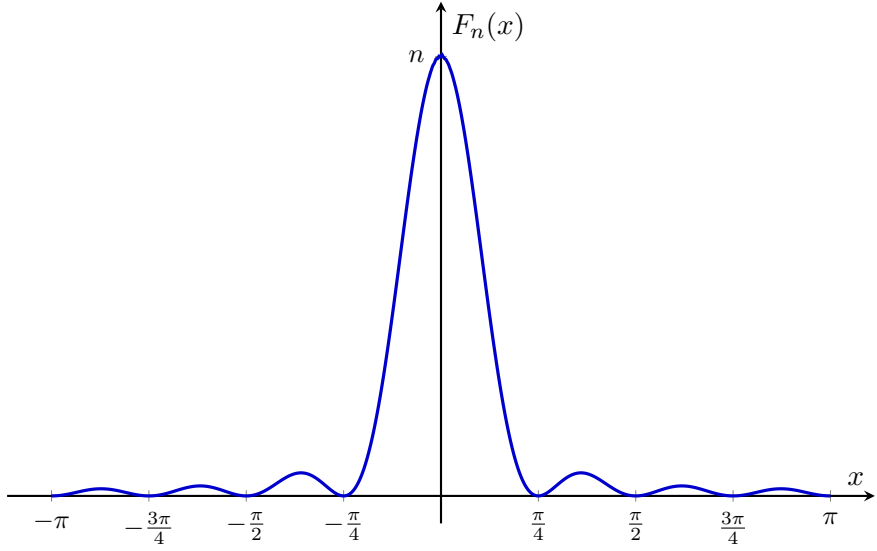

$F_n$ is a periodic even function with period $2\pi$, attaining its maximum value $n$ at $x = 0$.
For $k \in \mathbb{Z} \setminus n\mathbb{Z}$, $F_n(2k\pi/n) = 0$.

We state the following properties of the Fej\'er kernel.

\begin{lemma}\label{lemma:Fejer properties}
    Let $n\geq 4$. Then the following (i)--(iii) hold.
    \begin{itemize}
        \item[(i)] For every $x \in \mathbb{R}$,
            \begin{align*}
                \sum_{k=0}^{n-1} \frac{F_n(x + 2\pi k/n)}{F_n(0)} = 1.
            \end{align*}
        \item[(ii)] For each $k = 1, 2, \dots, \lfloor n/2 \rfloor - 1$,
            \begin{align*}
                \frac{F_n(x)}{F_n(0)}
                &\leq \frac{1}{k^2\pi^2} + \frac{1}{n^2}\left( 1 - \frac{4}{\pi^2} \right), \quad x\in I_k,
            \end{align*}
            where
            \begin{align*}
                I_k = \left[\frac{2k\pi}{n}, \frac{2(k+1)\pi}{n}\right],
            \end{align*}
            and $\lfloor \cdot \rfloor$ denotes the floor function.
        \item[(iii)] For every $x \in [0, \pi/n]$,
            \begin{align*}
                \frac{F_n(x)}{F_n(0)} \geq \frac{4}{\pi^2}.
            \end{align*}
    \end{itemize}
\end{lemma}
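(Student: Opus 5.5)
For (i), I would use the Dirichlet-sum representation of the kernel. For $x\notin 2\pi\mathbb{Z}$, summing the geometric series gives
\begin{align*}
\Bigl|\sum_{l=0}^{n-1} e^{ilx}\Bigr|^2=\frac{\sin^2(nx/2)}{\sin^2(x/2)},
\end{align*}
so $F_n(x)/F_n(0)=n^{-2}\bigl|\sum_{l=0}^{n-1}e^{ilx}\bigr|^2$. This identity also holds at $x\in 2\pi\mathbb{Z}$, where both sides equal $1$. Expanding the square gives
\begin{align*}
\frac{F_n(x)}{F_n(0)}=\frac{1}{n^2}\sum_{l,l'=0}^{n-1} e^{i(l-l')x}.
\end{align*}
Replacing $x$ by $x+2\pi k/n$ and summing over $k=0,\dots,n-1$ produces the factor $\sum_k e^{2\pi i (l-l')k/n}$. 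Since $|l-l'|<n$, this factor equals $n\,\delta_{l,l'}$. Only the $n$ diagonal terms survive, so the total is $n^{-2}\cdot n\cdot n=1$.

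For (ii), the key auxiliary inequality I would establish is
\begin{align*}
\frac{1}{\sin^2 t}\le \frac{1}{t^2}+\Bigl(1-\frac{4}{\pi^2}\Bigr),\qquad t\in(0,\pi/2].
\end{align*}
It follows once we show that $h(t)=\sin^{-2}t-t^{-2}$ is nondecreasing on $(0,\pi/2]$, because $h(\pi/2)=1-4/\pi^2$. Since $h'(t)=2t^{-3}-2\cos t\,\sin^{-3}t$, monotonicity is equivalent to $(\sin t/t)^3\ge\cos t$.

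To prove this, I would set $g(t)=\sin t\,(\cos t)^{-1/3}-t$, which satisfies $g(0)=0$. Its derivative is
\begin{align*}
g'(t)=(\cos t)^{2/3}+\tfrac13\sin^2 t\,(\cos t)^{-4/3}-1.
\end{align*}
Writing $c=\cos^{2/3}t$, this becomes $\tfrac{2}{3}c+\tfrac{1}{3}c^{-2}-1$. By AM--GM applied to $c,c,c^{-2}$ this is $\ge 0$. Hence $g\ge 0$, which gives $\sin t\ge t\cos^{1/3}t$, i.e.\ $(\sin t/t)^3\ge\cos t$. I expect this elementary inequality to be the main obstacle; the rest is mechanical.

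Granting the auxiliary inequality, take $x\in I_k$ with $1\le k\le\lfloor n/2\rfloor-1$. Then $t=x/2\in[k\pi/n,(k+1)\pi/n]\subset(0,\pi/2]$, so
\begin{align*}
\frac{F_n(x)}{F_n(0)}=\frac{\sin^2(nx/2)}{n^2\sin^2(x/2)}\le \frac{1}{n^2}\Bigl(\frac{1}{(x/2)^2}+1-\frac{4}{\pi^2}\Bigr),
\end{align*}
using $\sin^2(nx/2)\le 1$. Finally $x/2\ge k\pi/n$ gives $\frac{1}{n^2(x/2)^2}\le\frac{1}{k^2\pi^2}$, which is the claimed bound.

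For (iii), the case $x=0$ is trivial. For $x\in(0,\pi/n]$, set $t=nx/2\in(0,\pi/2]$. Jordan's inequality gives $\sin t\ge 2t/\pi$, hence $\sin^2(nx/2)\ge \frac{4}{\pi^2}\bigl(\frac{nx}{2}\bigr)^2$. On the other hand $\sin^2(x/2)\le (x/2)^2$. Dividing these,
\begin{align*}
\frac{F_n(x)}{F_n(0)}=\frac{\sin^2(nx/2)}{n^2\sin^2(x/2)}\ge\frac{4}{\pi^2}\cdot\frac{n^2(x/2)^2}{n^2(x/2)^2}=\frac{4}{\pi^2}.
\end{align*}
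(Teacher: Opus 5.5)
Your proof is correct and follows the paper's argument: (i) via the trigonometric-polynomial form of the kernel (your squared geometric sum is the same Fej\'er representation, and summing over the shifts kills every nonzero frequency); (ii) via $\sin^2(nx/2)\le 1$ combined with $1/\sin^2 t\le 1/t^2+1-4/\pi^2$; and (iii) via $\sin t\le t$ together with Jordan's inequality. The differences are minor: you apply the auxiliary inequality at $t=x/2$ and then use $x/2\ge k\pi/n$, whereas the paper first uses monotonicity of $\sin$ and applies it at $t=k\pi/n$; and you actually prove that inequality (by reducing to $(\sin t/t)^3\ge\cos t$ and using AM--GM on the derivative), which the paper only asserts.
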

\begin{proof}
    (i) This is an immediate consequence of the well-known formula
    \begin{align*}
        F_n(x)
        = \sum_{j=-n+1}^{n-1} \left(1 - \frac{|j|}{n}\right) e^{ijx}.
    \end{align*}

    (ii) Let $x\in I_k$.
    Since $\sin^2(nx/2) \leq 1$ and $\sin^2(k\pi/n) \leq \sin^2(x/2)$, we have
    \begin{align*}
        \frac{F_n(x)}{F_n(0)}
        &= \frac{1}{n^2}\frac{\sin^2(nx/2)}{\sin^2(x/2)} \leq \frac{1}{n^2\sin^2(k\pi/n)}.
    \end{align*}

    The inequality
    \begin{align*}
        \frac{1}{\sin^2 x}
        \leq \frac{1}{x^2} + 1 - \frac{4}{\pi^2},
        \qquad 0 < x < \frac{\pi}{2},
    \end{align*}
    applied with $x = k\pi/n$, yields
    \begin{align*}
        \frac{1}{n^2\sin^2(k\pi/n)}
        \leq \frac{1}{k^2\pi^2} + \frac{1}{n^2}\left(1 - \frac{4}{\pi^2}\right).
    \end{align*}

    (iii) If $0 \le x \le \pi/n$, then
    \begin{align*}
        \frac{F_n(x)}{F_n(0)}
        =\frac{1}{n^2}\frac{\sin^2(nx/2)}{\sin^2(x/2)}
        \geq \frac{1}{n^2}\frac{\sin^2(nx/2)}{(x/2)^2}
        \geq \frac{4}{\pi^2},
    \end{align*}
    where we used $\sin t \leq t$ and $\sin t \geq 2t/\pi$ for $0 \leq t \leq \pi/2$.
\end{proof}

\subsection*{Peak Detection}
Let $\mu$ be a state-averaged QPE measure. That is,
\begin{align*}
    p_j 
    &:= \mu(\{a_j\} \times S_{\mathbb{C}}^{M-1})\\
    &= \frac{1}{M} \sum_{k=0}^{M-1} \tilde{F}_N(2\pi(\theta_k - a_j)), \quad j = 0, 1, \dots, N-1.
\end{align*}
Here, $\tilde{F}_N(x) = F_N(x)/F_N(0)$.

For $k = 0, 1, \dots, M-1$, define the closed interval on $\mathbb{T}$:
\begin{align*}
    I_k
    &= \left[\theta_k - \frac{1}{N},\, \theta_k + \frac{1}{N}\right]_{\mathbb{T}}.
\end{align*}
Set
\begin{align*}
    \tau &= \frac{4}{\pi^2}.
\end{align*}

\begin{theorem}[Peak detection]\label{theorem:peak detection}
    Assume that $N \geq 4M$ and
    \begin{align*}
        \frac{3}{N} \leq \min_{0 \leq k \leq M-1} |\theta_{k+1} - \theta_k|_{\mathbb{T}}.
    \end{align*}
    Define
    \begin{align*}
        C &= \left\{a_j \in A_N \,\middle|\, p_j \geq \frac{\tau}{M},~j = 0, 1, \dots, N-1\right\}.
    \end{align*}
    Then the following (1)--(5) hold.
    \begin{itemize}
        \item[(1)] If $k \neq j$, then $I_k \cap I_j = \emptyset$.
        \item[(2)] For all $k = 0, 1, \dots, M-1$, $C \cap I_k \neq \emptyset$.
        \item[(3)] $C \subset \bigcup_{k=0}^{M-1} I_k$.
        \item[(4)] $A_N \setminus \bigcup_{k=0}^{M-1} I_k \neq \emptyset$.
        \item[(5)] If $a_j \in C$, then $a_{j-1} \notin C$ or $a_{j+1} \notin C$.
    \end{itemize}
\end{theorem}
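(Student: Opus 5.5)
The plan is to prove the five items in order. (1) and (4) are counting arguments. (2) and (3) come from the two Fej\'er bounds of Lemma~\ref{lemma:Fejer properties}. (5) reuses the estimate from (3) in a refined form.

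For (1), each $I_k$ is a closed arc of length $2/N$ centred at $\theta_k$. Two distinct arcs intersect only if $|\theta_k-\theta_j|_{\mathbb{T}}\le 2/N$, which contradicts the separation $3/N$. For (4), a closed arc of length $2/N$ contains at most three grid points $a_j$. Hence $|A_N\cap\bigcup_k I_k|\le 3M<4M\le N=|A_N|$, and some grid point lies outside every $I_k$.

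For (2), fix $k$ and take the grid point $a_j$ nearest to $\theta_k$, so that $|\theta_k-a_j|_{\mathbb{T}}\le 1/(2N)$ and $a_j\in I_k$. Since $F_N$ is even and $2\pi$-periodic, Lemma~\ref{lemma:Fejer properties}(iii) gives $\tilde F_N(2\pi(\theta_k-a_j))\ge\tau$. All other summands in $p_j$ are nonnegative, so $p_j\ge\tau/M$ and $a_j\in C\cap I_k$.

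For (3), let $a_j\notin\bigcup_k I_k$. I will show $p_j\le \sigma/M+d_N<\tau/M$, which also gives the gap stated in the main text. Split the eigenphases according to which side of $a_j$ they lie on along the shorter geodesic. On each side, order them by distance to $a_j$. The nearest is at distance $>1/N$, and by the $3/N$ separation the $l$-th one ($l=0,1,\dots$) is at distance $>(3l+1)/N$. Apply Lemma~\ref{lemma:Fejer properties}(ii) with $n=N$ and index $\ge 3l+1$, using monotonicity of the bound in $k$. This bounds each such term by $\frac{1}{(3l+1)^2\pi^2}+\frac{1-\tau}{N^2}$. Distances near $1/2$, where the index exceeds $\lfloor N/2\rfloor-1$, are handled directly by $\tilde F_N(x)\le 1/(N^2\sin^2(x/2))$ together with $\sin t\ge 2t/\pi$. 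Summing both sides over at most $M$ eigenphases and dividing by $M$ gives $p_j\le \sigma/M + d_N$. It remains to check $\sigma+Md_N<\tau$. Numerically $\sigma\approx\frac{2}{\pi^2}\cdot1.12$ while $\tau=\frac{4}{\pi^2}$, and $Md_N\le (1-\tau)/(16M)$ because $N\ge 4M$.

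For (5), which I expect to be the main obstacle because of boundary cases, I first sharpen (3): every $a\in C$ satisfies $|a-\theta_k|_{\mathbb{T}}<1/N$ for some $k$. Suppose instead $|a-\theta_k|_{\mathbb{T}}=1/N$ exactly. Then the $k$-th term vanishes, since $F_N(2\pi/N)=0$. The remaining eigenphases are at distance $\ge 2/N$ on the side of $\theta_k$ and $>1/N$ on the other side, so the same summation as in (3) gives $p_j<\tau/M$. Now suppose $a_{j-1},a_j,a_{j+1}\in C$. An open arc of length $2/N$ contains at most two grid points, so the three points cannot all be within $1/N$ of a single $\theta_k$. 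Hence two consecutive ones, say $a_{j-1},a_j$, are within $1/N$ of one phase $\theta_k$, forcing $\theta_k\in(a_{j-1},a_j)$. The remaining point $a_{j+1}$ is within $1/N$ of some other phase $\theta_{k'}$, so $\theta_{k'}\in(a_j,a_{j+2})$. Then $|\theta_k-\theta_{k'}|_{\mathbb{T}}<3/N$, contradicting the separation assumption. The symmetric configuration, where $a_j,a_{j+1}$ share a phase, is identical. The delicate points are verifying the strict inequalities in this sharpened version of (3) and the torus wrap-around.
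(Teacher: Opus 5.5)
Your proof is correct. Items (1), (2), (4) and the tail estimate behind (3) follow the paper's arguments, and in two places you are more careful than the paper.

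\emph{Item (2).} You take the nearest grid point, at distance at most $1/(2N)$. That is what Lemma~\ref{lemma:Fejer properties}(iii) actually requires; the paper only records distance $\le 1/N$.

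\emph{Item (3).} Splitting the phases into the two half-circles around $a_j$ is the right bookkeeping. The paper's indexed claims $|\theta_{k'+l+1}-a_j|_{\mathbb{T}},\,|\theta_{k'-l}-a_j|_{\mathbb{T}}>(3l+1)/N$ fail for phases reached ``the long way round''. For instance, when $k'=M-1$ the phase $\theta_0=\theta_{k'-(M-1)}$ is the nearest phase on the other side of $a_j$. The final bound $\sigma/M+d_N$ is unaffected.

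Your fallback for distances ``near $1/2$'' is not needed, and applied over a genuine range it would be too weak: $\sin t\ge 2t/\pi$ alone only gives $1/(4(3l+1)^2)$. Since $N$ is even, for every distance $d<1/2$ the index $\lfloor Nd\rfloor\ge 3l+1$ is admissible in Lemma~\ref{lemma:Fejer properties}(ii). At $d=1/2$ the kernel vanishes, because $F_N(\pi)=0$. Alternatively, apply $1/\sin^2 y\le 1/y^2+1-4/\pi^2$ directly with $y=\pi d\in(0,\pi/2]$.

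\emph{Item (5).} Here your route differs from the paper's. The paper argues locally:
\begin{itemize}
\item if $a_j\in C\cap I_k$, one neighbour, say $a_{j+1}$, satisfies $|\theta_k-a_{j+1}|_{\mathbb{T}}\ge 1/N$;
\item the $3/N$ separation then gives $|\theta_l-a_{j+1}|_{\mathbb{T}}\ge 1/N$ for every $l$;
\item rerunning the estimate of (3) with non-strict inequalities gives $p_{j+1}<\tau/M$.
\end{itemize}
You instead first upgrade (3) to $C\subset\bigcup_k\{a:|a-\theta_k|_{\mathbb{T}}<1/N\}$, and then argue purely on the grid.

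Both routes hinge on the same fact: the bound of (3) survives at distance exactly $1/N$. The paper leaves this implicit in ``by the same argument''; you justify it via $F_N(2\pi/N)=0$. The paper's version is shorter and identifies which neighbour drops out. Yours yields, as a by-product, that endpoints of the $I_k$ never lie in $C$.

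One step in your (5) needs repair. ``Hence two consecutive ones share a phase'' does not follow from ``the three cannot all share one phase''. Assignments in which consecutive points are matched to different phases are not excluded by that sentence. The fix is the triangle inequality you already use at the end. If two consecutive grid points are strictly within $1/N$ of distinct phases, those phases are less than $3/N$ apart, which contradicts the separation assumption. So consecutive points of $C$ always share their phase. Then $a_{j-1},a_j,a_{j+1}\in C$ would place three grid points in one open arc of length $2/N$, which is impossible. This also shortens your argument.
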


\begin{proof}
    (1) Suppose $k \neq j$ and $I_k \cap I_j \neq \emptyset$.
    Taking $x \in I_k \cap I_j$, we have $|\theta_k - \theta_j|_{\mathbb{T}} \leq |\theta_k - x|_{\mathbb{T}} + |\theta_j - x|_{\mathbb{T}} \leq 2/N$, contradicting $|\theta_k - \theta_j|_{\mathbb{T}} \geq 3/N$.

    (2) For $k = 0, 1, \dots, M-1$, the set $A_N \cap I_k$ contains at least two adjacent points.
    Take two adjacent points $a_j, a_{j+1} \in A_N \cap I_k$.
    Since either $|a_j-\theta_k|_{\mathbb{T}} \leq 1/N$ or $|a_{j+1}-\theta_k|_{\mathbb{T}} \leq 1/N$, by Lemma~\ref{lemma:Fejer properties}(iii), we have
    \begin{align*}
        \max\{p_j, p_{j+1}\} \geq \frac{\tau}{M}.
    \end{align*}
    Hence, $C \cap I_k \neq \emptyset$.

    (3) Let $a_j \in A_N \setminus \bigcup_{k=0}^{M-1} I_k$.
    Then, for every $k = 0,1,\dots,M-1$,
    \begin{align*}
        |\theta_k - a_j|_{\mathbb{T}} > \frac{1}{N}.
    \end{align*}

    Let $k'$ be such that $a_j \in (\theta_{k'}, \theta_{k'+1})_{\mathbb{T}}$.
    Then
    \begin{align}
        p_j
        &= \frac{1}{M} \sum_{k=0}^{M-1} \tilde{F}_N(2\pi(\theta_k-a_j)) \nonumber\\
        &= \frac{1}{M} \sum_{l=0}^{M-k'-2} \tilde{F}_N\bigl(2\pi(\theta_{k'+l+1}-a_j)\bigr)
        + \frac{1}{M} \sum_{l=0}^{k'} \tilde{F}_N\bigl(2\pi(\theta_{k'-l}-a_j)\bigr).
        \label{eq:estimate of p_j}
    \end{align}

    For each $l \geq 0$, the assumption
    \begin{align*}
        |\theta_{k+1}-\theta_k|_{\mathbb{T}} \geq \frac{3}{N},
        \qquad k=0,1,\dots,M-1,
    \end{align*}
    together with $a_j \in (\theta_{k'}, \theta_{k'+1})_{\mathbb{T}}$ implies that
    \begin{align*}
        |\theta_{k'+l+1}-a_j|_{\mathbb{T}} > \frac{3l+1}{N}, \quad |\theta_{k'-l}-a_j|_{\mathbb{T}} > \frac{3l+1}{N}.
    \end{align*}
    Therefore, by Lemma~\ref{lemma:Fejer properties}(ii),
    \begin{align}
        \tilde{F}_N\bigl(2\pi(\theta_{k'+l+1}-a_j)\bigr),
        \ \tilde{F}_N\bigl(2\pi(\theta_{k'-l}-a_j)\bigr)
        \leq \frac{1}{\pi^2(3l+1)^2}
        + \frac{1}{N^2}\left(1-\frac{4}{\pi^2}\right).
        \label{eq:F_N ineq}
    \end{align}

    By \eqref{eq:estimate of p_j} and \eqref{eq:F_N ineq}, it follows that
    \begin{align}
        p_j
        &\leq \frac{2}{\pi^2 M}\sum_{l=0}^{\infty}\frac{1}{(3l+1)^2}
        + \frac{1}{N^2}\left(1-\frac{4}{\pi^2}\right) \nonumber\\
        &\leq \frac{1}{M}\left\{
            \frac{2}{\pi^2}\sum_{l=0}^{\infty}\frac{1}{(3l+1)^2}
            + \frac{1}{32}\left(1-\frac{4}{\pi^2}\right)
        \right\}
        < \frac{\tau}{M},\label{eq:gap between C and not C}
    \end{align}
    where the second inequality follows from $N \geq 4M \geq 8$.

    Hence,
    \begin{align*}
        C \subset \bigcup_{k=0}^{M-1} I_k.
    \end{align*}

    (4) Assuming $A_N \subset \bigcup_{k=0}^{M-1} I_k$, since each $I_k \cap A_N$ has at most 3 elements, we get $N \leq 3M$, contradicting $N \geq 4M$.

    (5) Let $a_j \in C$.
    By (3), there exists $k=0,1,\dots,M-1$ such that $a_j \in I_k$.
    Then $|\theta_k-a_{j-1}|_{\mathbb{T}} \geq 1/N$ or $|\theta_k-a_{j+1}|_{\mathbb{T}} \geq 1/N$.
    Suppose that $|\theta_k-a_{j+1}|_{\mathbb{T}} \geq 1/N$.

    For $l \neq k$,
    \begin{align*}
        \frac{3}{N}
        &\leq |\theta_k-\theta_l|_{\mathbb{T}} \\
        &\leq |\theta_k-a_j|_{\mathbb{T}} + |a_j-a_{j+1}|_{\mathbb{T}} + |\theta_l-a_{j+1}|_{\mathbb{T}} \\
        &\leq \frac{2}{N} + |\theta_l-a_{j+1}|_{\mathbb{T}},
    \end{align*}
    and hence $|\theta_l-a_{j+1}|_{\mathbb{T}} \geq 1/N$.
    Therefore, $|\theta_l-a_{j+1}|_{\mathbb{T}} \geq 1/N$ for all $l = 0,1,\dots,M-1$.
    By the same argument as in the proof of~(3), we obtain
    \begin{align*}
        p_{j+1} < \frac{\tau}{M},
    \end{align*}
    and thus $a_{j+1} \notin C$.
\end{proof}

Theorem~\ref{theorem:peak detection} shows that, for a state-averaged QPE measure, $C$ contains the points relevant for detecting peaks within distance $1/N$ of each eigenphase.
In other words, an eigenphase exists in the neighborhood of every point contained in $C$.
The theorem also provides sufficient conditions for this to hold.

\begin{corollary}
    Assume that $N \geq 4M$ and
    \begin{align*}
        \frac{3}{N} \leq \min_{0 \leq k \leq M-1} |\theta_{k+1} - \theta_k|_{\mathbb{T}}.
    \end{align*}
    Then the following hold.
    \begin{itemize}
        \item[(1)] If $a_j, a_{j+1} \in C$, then $a_{j-1}, a_{j+2} \notin C$, and there exists $k =0, 1, \dots, M-1$ such that $\theta_k \in [a_j, a_{j+1}]_{\mathbb{T}}$.
        \item[(2)] If $a_j \in C$ and $a_{j-1}, a_{j+1} \notin C$, then there exists $k =0, 1, \dots, M-1$ such that 
        \begin{align*}
            \theta_k \in \left[a_j - \frac{1}{2N}, a_j + \frac{1}{2N}\right]_{\mathbb{T}}.
        \end{align*}
    \end{itemize}
\end{corollary}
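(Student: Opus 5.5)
The plan is to derive both parts of the corollary from Theorem~\ref{theorem:peak detection}, using mainly items (1), (3) and (5) together with the fact that each $I_k$ has length $2/N$. The basic observation is this. By (3), every point of $C$ lies in some $I_k$. By (1) the $I_k$ are pairwise disjoint. Since $|\theta_k-\theta_l|_{\mathbb{T}}\ge 3/N$, two adjacent grid points $a_j,a_{j+1}\in C$ lie in the same interval. Indeed, if $a_j\in I_k$ and $a_{j+1}\in I_l$ with $l\neq k$, then
\[
|\theta_k-\theta_l|_{\mathbb{T}}\le \frac{1}{N}+\frac{1}{N}+\frac{1}{N}=\frac{3}{N}.
\]
In the boundary case of equality, $a_j=\theta_k+1/N$ and $a_{j+1}=\theta_l-1/N$. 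Then $a_j$ and $a_{j+1}$ are at distance exactly $1/N$ from every eigenphase. I would handle this case by re-running the estimate of part (3) of the theorem's proof, which only needs $|\theta_k-a_j|_{\mathbb{T}}\ge 1/N$ for all $k$, as in the proof of (5). That estimate gives $p_j<\tau/M$, which is a contradiction. So $a_j,a_{j+1}\in I_k$ for a single $k$.

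\textbf{Part (1).} Item (5) of the theorem applied at $a_{j+1}$ gives $a_j\notin C$ or $a_{j+2}\notin C$. Since $a_j\in C$ by assumption, we get $a_{j+2}\notin C$. Applying (5) at $a_j$ gives $a_{j-1}\notin C$ in the same way. Next, both $a_j$ and $a_{j+1}$ lie in $I_k=[\theta_k-1/N,\theta_k+1/N]_{\mathbb{T}}$ and are $1/N$ apart. Suppose $\theta_k\notin[a_j,a_{j+1}]_{\mathbb{T}}$, say $\theta_k<a_j$ locally. Then $a_{j+1}-\theta_k>1/N$, which contradicts $a_{j+1}\in I_k$. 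The case $\theta_k>a_{j+1}$ is symmetric. Hence $\theta_k\in[a_j,a_{j+1}]_{\mathbb{T}}$.

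\textbf{Part (2).} Let $a_j\in C$ with $a_j\in I_k$ by (3). Suppose, for contradiction, that $|\theta_k-a_j|_{\mathbb{T}}>1/(2N)$, say $\theta_k\in(a_j+1/(2N),\,a_j+1/N]$. Then $|\theta_k-a_{j+1}|_{\mathbb{T}}<1/(2N)\le 1/N$. Lemma~\ref{lemma:Fejer properties}(iii) then gives
\[
p_{j+1}\ge \frac{1}{M}\tilde F_N\bigl(2\pi(\theta_k-a_{j+1})\bigr)\ge \frac{\tau}{M},
\]
since all terms in $p_{j+1}$ are nonnegative and $2\pi|\theta_k-a_{j+1}|_{\mathbb{T}}\le\pi/N$. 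Here I use that $\tilde F_N$ is even and periodic. Thus $a_{j+1}\in C$, which contradicts the assumption. The case $\theta_k\in[a_j-1/N,\,a_j-1/(2N))$ is symmetric and uses $a_{j-1}$. Hence $\theta_k\in[a_j-1/(2N),a_j+1/(2N)]_{\mathbb{T}}$.

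\textbf{Main obstacle.} The delicate step is the shared-interval claim in the boundary equality case. There $3/N$ separation allows $a_j$ and $a_{j+1}$ to sit at the ends of two different neighbourhoods. I expect it to be resolved exactly as in the proof of item (5), by the non-strict-distance version of the bound \eqref{eq:gap between C and not C}. Everything else is routine interval bookkeeping on $\mathbb{T}$.
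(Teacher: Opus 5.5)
Your proof is correct and follows the paper's route. Part (1) rests on items (3) and (5) of Theorem~\ref{theorem:peak detection}; the paper states this in one line, and your shared-neighbourhood argument supplies the missing step that places $\theta_k$ in $[a_j,a_{j+1}]_{\mathbb{T}}$, with the equality case $|\theta_k-\theta_l|_{\mathbb{T}}=3/N$ settled by the non-strict off-peak bound exactly as in the proof of item (5). Part (2) is the same contradiction through Lemma~\ref{lemma:Fejer properties}(iii) that the paper uses.

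One wording fix: in the equality case, $a_j$ and $a_{j+1}$ lie at distance \emph{at least} $1/N$ from every eigenphase, not exactly $1/N$. That is all the estimate requires.
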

\begin{proof}
    (1) It follows from Theorem~\ref{theorem:peak detection}(3) and (5).

    (2) Suppose $a_j \in C$ and $a_{j-1}, a_{j+1} \notin C$.
    Since $a_j\in C$, by Theorem~\ref{theorem:peak detection}(3), $a_j\in I_k$ for some $k$.
    If $\theta_k \notin [a_j - 1/(2N), a_j + 1/(2N)]_{\mathbb{T}}$, then $a_{j-1} \in I_k$ or $a_{j+1} \in I_k$.

    If $a_{j-1} \in I_k$, then $|\theta_k - a_{j-1}|_{\mathbb{T}} < 1/(2N)$, and by Lemma~\ref{lemma:Fejer properties}(iii), we get $a_{j-1} \in C$, contradicting $a_{j-1} \notin C$.
    The case $a_{j+1} \in I_k$ leads to a similar contradiction.
\end{proof}

\subsection*{Estimation of the Number of Samples}
Let $\{Y_k = (X_k, Z_k)\}_{k=1}^\infty$ be an i.i.d.\ sequence of $A_N \times S_{\mathbb{C}}^{M-1}$-valued random variables defined on a probability space $(\Omega, \mathcal{F}, \mathbb{P})$.
For $k = 1, 2, \dots$, denote the distribution of $Y_k$ by $\mathbb{P}^{Y_k}=\mathbb{P}\circ Y_k^{-1}$.
Assume that $\mathbb{P}^{Y_1}$ is a $\nu$-QPE measure, i.e.,
\begin{align*}
    \mathbb{P}(Y_1 \in \{a\} \times B)
    &= \int_B \nu(dz)\, \sum_{k=0}^{M-1} |z_k|^2 \, \tilde{F}_N(2\pi(\theta_k - a)), \quad a \in A_N,~B \in \mathcal{B}(S_{\mathbb{C}}^{M-1}).
\end{align*}
Writing $Z_k = (Z_k^0, Z_k^1, \dots, Z_k^{M-1})$, this means
\begin{align*}
    \mathbb{P}(X_1 = a \mid Z_1)
    &= \sum_{k=0}^{M-1} |Z_1^{k}|^2 \, \tilde{F}_N(2\pi(\theta_k - a)), \quad a \in A_N,
\end{align*}
where $\mathbb{P}(\cdot\mid Z_1)$ represents the conditional probability with respect to $Z_1$.

When $\mathbb{P}^{Y_1}$ is state-averaged,
\begin{align*}
    p_j := \mathbb{P}(X_1 = a_j) = \frac{1}{M}\sum_{k=0}^{M-1}\tilde{F}_N(2\pi(\theta_k - a_j)), \quad j = 0, 1, \dots, N-1.
\end{align*}
In what follows, we assume that $\mathbb{P}^{Y_1}$ is state-averaged.

For $K = 1, 2, \dots$, define
\begin{align*}
    \hat{p}_j
    &= \hat{p}_j(K) = \frac{1}{K}\sum_{k=1}^K \boldsymbol{1}_{\{a_j\}}(X_k), \quad j = 0, 1, \dots, N-1,
\end{align*}
where $\boldsymbol{1}_{\{a_j\}}$ is the indicator function of $\{a_j\}$.
By the strong law of large numbers, for each $j = 0, 1, \dots, N-1$, $\hat{p}_j$ converges almost surely to $p_j$ as $K \to \infty$.

Define
\begin{align*}
    & \sigma = \frac{2}{\pi^2}\sum_{l=0}^{\infty} \frac{1}{(3l+1)^2}, \quad d_N = \frac{1}{N^2}\left( 1 - \frac{4}{\pi^2} \right).
\end{align*}
Note that
\begin{align*}
    \sigma < \tau, \quad \frac{\sigma}{M} + d_N < \frac{\tau}{M}
\end{align*}
by \eqref{eq:gap between C and not C}.

Recall that by definition, if $a_j \in C$ then $p_j \geq \tau/M$.
On the other hand, if $a_j \in D := A_N \setminus \bigcup_{k=0}^{M-1} I_k$, then $a_j \notin C$ and by~\eqref{eq:gap between C and not C}, $p_j \leq \sigma/M + d_N$.
Denote by $I_C$ the set of indices $j$ with $a_j \in C$, and by $I_D$ the set of indices $j$ with $a_j \in D$.

\begin{lemma}\label{lemma:UB of p}
    Assume that $N \geq 4M$ and
    \begin{align*}
        \frac{3}{N} \leq \min_{0 \leq k \leq M-1} |\theta_{k+1} - \theta_k|_{\mathbb{T}}.
    \end{align*}
    Then, for all $j \in I_C$,
    \begin{align*}
        p_j \leq \frac{\gamma}{M} + d_N,
    \end{align*}
    where
    \begin{align*}
        \gamma &= 1 + \frac{1}{\pi^2}\left( \frac{\pi^2}{6} - \sum_{l=0}^\infty \frac{1}{(3l+1)^2} \right).
    \end{align*}
\end{lemma}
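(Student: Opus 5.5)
By Theorem~\ref{theorem:peak detection}(3), every $a_j\in C$ lies in some $I_{k}$. I will split $p_j$ into the one term from this nearby eigenphase and the terms from all the others. The nearby term is at most $1/M$, since $\tilde F_N\le 1$. The remaining terms I will bound through Lemma~\ref{lemma:Fejer properties}(ii), using the separation assumption to get lower bounds on the distances from $a_j$ to the other eigenphases.

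\emph{Step 1 (geometry).} Fix $j\in I_C$ and choose $k$ with $|\theta_k-a_j|_{\mathbb{T}}\le 1/N$. By symmetry I may assume $a_j=\theta_k+t$ with $t\in[0,1/N]$. Relabel the other eigenphases as $\theta_{k+l+1}$ going to the right and $\theta_{k-l-1}$ going to the left, for $l\ge 0$. Since consecutive gaps are at least $3/N$, we get $\theta_{k+l+1}-\theta_k\ge 3(l+1)/N$. Hence
\begin{align*}
|\theta_{k+l+1}-a_j|_{\mathbb{T}}\ge \frac{3l+3}{N}-t\ge\frac{3l+2}{N},\qquad
|\theta_{k-l-1}-a_j|_{\mathbb{T}}\ge \frac{3l+3}{N}+t\ge\frac{3l+3}{N}.
\end{align*}
Because all phases lie on a circle of length $1$, each eigenphase other than $\theta_k$ is counted once this way. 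For those whose toroidal distance to $a_j$ is taken the other way around, the same bound holds by reading the ordering in that direction, exactly as in the proof of Theorem~\ref{theorem:peak detection}(3).

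\emph{Step 2 (kernel bounds).} Write $x=2\pi(\theta-a_j)$. Since $\tilde F_N$ is even and $2\pi$-periodic, I may take $x\in[0,\pi]$. A distance of at least $d/N$ places $x$ in some interval $[2\pi k''/N,2\pi(k''+1)/N]$ with $k''\ge d$. Lemma~\ref{lemma:Fejer properties}(ii) then gives
\begin{align*}
\tilde F_N(x)\le \frac{1}{d^2\pi^2}+\Bigl(1-\frac{4}{\pi^2}\Bigr)\frac{1}{N^2}.
\end{align*}
The bound for $k''$ is dominated by the one for $d$, since $1/(k'')^2\le 1/d^2$. Care is needed near $x=\pi$, where $k''$ could reach $\lfloor N/2\rfloor$. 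In that case I will use directly $\tilde F_N(x)\le 1/(N^2\sin^2(x/2))$ with $\sin^2(x/2)\ge\sin^2(d\pi/N)$, which is the same estimate underlying (ii). This is valid because $d\le N/2$ there.

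\emph{Step 3 (summation).} Summing the bounds gives
\begin{align*}
p_j\le \frac{1}{M}\Bigl\{1+\frac{1}{\pi^2}\sum_{l\ge0}\frac{1}{(3l+2)^2}+\frac{1}{\pi^2}\sum_{l\ge0}\frac{1}{(3l+3)^2}\Bigr\}+\frac{M-1}{M}d_N .
\end{align*}
The two infinite series together are $\sum_{n\ge1}n^{-2}$ minus the terms with $n\equiv1\pmod 3$, which equals $\pi^2/6-\sum_{l\ge0}(3l+1)^{-2}$. So the braces equal exactly $\gamma$. The additive error enters once per term but with weight $1/M$, so its total is $(M-1)d_N/M\le d_N$. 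This proves $p_j\le\gamma/M+d_N$.

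The main obstacle is Step 1: the bookkeeping that shows each non-nearby eigenphase receives a distinct lower bound from the two progressions $(3l+2)/N$ and $(3l+3)/N$. The wrap-around on $\mathbb{T}$ makes this delicate, and so does the possibility that the same eigenphase is nearest from both sides when $M$ is small. I would handle both issues by ordering the other eigenphases by toroidal distance from $a_j$ on each side and applying the gap assumption consecutively, as in the proof of Theorem~\ref{theorem:peak detection}(3).
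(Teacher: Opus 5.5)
Your proposal is correct and is essentially the paper's proof. You split off the nearby term (at most $1/M$), bound the other $M-1$ terms via Lemma~\ref{lemma:Fejer properties}(ii) using the distance progressions $(3l+2)/N$ and $(3l+3)/N$ (the paper's $(3l-1)/N$ and $3l/N$ for $l\ge 1$), and sum to get $\gamma/M$ plus at most $d_N$. Your extra care fills in details the paper leaves implicit, and both devices work as you describe: you assign each remaining eigenphase to the side that realizes its toroidal distance from $a_j$, so the two progressions are used injectively, and you handle the endpoint $x=\pi$ directly.
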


\begin{proof}
    Let $j \in I_C$. Then $a_j \in C$.
    By Theorem~\ref{theorem:peak detection}(3), we can take $k'$ such that $a_j \in I_{k'}$.

    Let $l \geq 1$. 
    If $a_j \in [\theta_{k'} - 1/N, \theta_{k'}]_{\mathbb{T}}$, then
    \begin{align*}
        |\theta_{k' + l} - a_j|_{\mathbb{T}} \geq \frac{3l}{N}, 
        \quad
        |\theta_{k' - l} - a_j|_{\mathbb{T}} \geq \frac{3l - 1}{N}.
    \end{align*}
    If $a_j \in [\theta_{k'}, \theta_{k'} + 1/N]_{\mathbb{T}}$, then
    \begin{align*}
        |\theta_{k' + l} - a_j|_{\mathbb{T}} \geq \frac{3l - 1}{N}, 
        \quad
        |\theta_{k' - l} - a_j|_{\mathbb{T}} \geq \frac{3l}{N}.
    \end{align*}

    By Lemma~\ref{lemma:Fejer properties}(ii), and using the bound
    $\tilde{F}_N(2\pi(\theta_{k'} - a_j)) \leq 1$, we obtain
    \begin{align*}
        p_j
        &\leq \frac{1}{M} + \frac{1}{M} \sum_{\substack{0\leq k\leq M-1\\k \neq k'}} \tilde{F}_N(2\pi(\theta_k - a_j)) \\
        &\leq \frac{1}{M} + \frac{1}{M} \sum_{l \equiv 0,2\!\!\!\pmod{3}} \frac{1}{\pi^2 l^2}
            + \frac{1}{N^2}\left( 1 - \frac{4}{\pi^2} \right) \\
        &= \frac{1}{M} \left\{ 1 + \frac{1}{\pi^2}\left( \frac{\pi^2}{6} - \sum_{l=0}^\infty \frac{1}{(3l+1)^2} \right) \right\} + \frac{1}{N^2}\left( 1 - \frac{4}{\pi^2} \right),
    \end{align*}
    where we used
    \begin{align*}
        \sum_{l=1}^\infty \frac{1}{l^2} &= \frac{\pi^2}{6}. \qedhere
    \end{align*}
\end{proof}

In QPE, $C$ provides the approximate peak locations near eigenphases, and confusing a point in $I_C$ with a point in $I_D$ would result in an incorrect eigenphase estimate.
The following provides a probability bound on such confusion and the shot count estimate needed to keep this probability below a prescribed level.

\begin{theorem}[Sample-complexity estimate]\label{theorem:shot estimate}
    Assume that $M\geq 3$, $N \geq 4M$, and
    \begin{align*}
        \frac{3}{N} \leq \min_{0 \leq k \leq M-1} |\theta_{k+1} - \theta_k|_{\mathbb{T}}.
    \end{align*}
    Let $\varepsilon>0$ satisfy
    \begin{align*}
        0 < \varepsilon \leq \frac{\tau - \sigma}{2} - \frac{Md_N}{2}.
    \end{align*}
    Then
    \begin{align*}
        &\mathbb{P}\left(\left\{\min_{j \in I_C} \hat{p}_j \leq \frac{\tau}{M} - \frac{\varepsilon}{M}\right\} \cup \left\{ \max_{j \in I_D} \hat{p}_j \geq \frac{\sigma}{M} + \frac{\varepsilon}{M} + d_N \right\}\right) \\[.5em]
        & \hspace{3em} \leq 2M \exp\left\{-KH^{-}\left(\frac{\gamma}{M}+d_N, \frac{\varepsilon}{M}\right)\right\} + (N-M) \exp\left\{-KH^{+}\left(\frac{\gamma}{M}+d_N, \frac{\varepsilon}{M}\right)\right\} \\[.5em]
        & \hspace{3em} \leq (N+M) \exp\left\{-KH^{+}\left(\frac{\gamma}{M}+d_N, \frac{\varepsilon}{M}\right)\right\},
    \end{align*}
    where, for $0 < a < 1/3$,
    \begin{align*}
        & H^{+}(x, a) = (x + a)\log\frac{x + a}{x} + (1 - x - a)\log\frac{1 - x - a}{1 - x}, \quad 0<x<1-a,\\
        & H^{-}(x, a) = (x - a)\log\frac{x - a}{x} + (1 - x + a)\log\frac{1 - x + a}{1 - x}, \quad a<x<1.
    \end{align*}

    Furthermore, for $0 < \delta < 1$, if
    \begin{align*}
        K
        &\geq \left\{H^{+}\left(\frac{\gamma}{M}+d_N, \frac{\varepsilon}{M}\right)\right\}^{-1}\log\frac{N+M}{\delta} \\[.5em]
        & \sim \left\{(\gamma+\varepsilon)\log\frac{\gamma+\varepsilon}{\gamma} - \varepsilon \right\}^{-1}\,M\log\frac{N+M}{\delta}, \quad M \to \infty,
    \end{align*}
    then the above probability is at most $\delta$.
\end{theorem}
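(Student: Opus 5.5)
The plan is a union bound over grid points combined with Chernoff--Hoeffding (KL-form) tail bounds for binomial counts. For each fixed $j$, the variable $K\hat{p}_j=\sum_{k=1}^K \boldsymbol{1}_{\{a_j\}}(X_k)$ is $\mathrm{Binomial}(K,p_j)$, since the $X_k$ are i.i.d.\ with $\mathbb{P}(X_1=a_j)=p_j$ by the state-averaged assumption. We will use the standard Chernoff bounds
\[
\mathbb{P}(\hat p_j\le p-a)\le e^{-KH^-(p,a)}\ (p\le p_j),\qquad \mathbb{P}(\hat p_j\ge p+a)\le e^{-KH^+(p,a)}\ (p\ge p_j).
\]
We then split the bad event into the lower-tail events for $j\in I_C$ and the upper-tail events for $j\in I_D$.

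\textbf{Upper tail on $I_D$.} For $j\in I_D$, \eqref{eq:gap between C and not C} gives $p_j\le \sigma/M+d_N$. Chernoff then yields
\[
\mathbb{P}\bigl(\hat p_j\ge \sigma/M+d_N+\varepsilon/M\bigr)\le \exp\{-KH^+(\sigma/M+d_N,\varepsilon/M)\}.
\]
Since $H^+(x,a)$ is decreasing in $x$ for $x\le 1-a$ (its derivative in $x$ has sign of $\log\frac{x(1-x-a)}{(x+a)(1-x)}<0$) and $\sigma<\gamma$, we may replace the base point by $\gamma/M+d_N$. We bound $|I_D|\le N-M$, using that each $I_k\cap A_N$ has at least two points by Theorem~\ref{theorem:peak detection}(2) and the $I_k$ are disjoint by (1); in fact $|I_D|\le N-2M$.

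\textbf{Lower tail on $I_C$.} For $j\in I_C$ we have $\tau/M\le p_j\le \gamma/M+d_N$ by Lemma~\ref{lemma:UB of p}. Here the argument must be monotone in the right direction. The bound $\mathbb{P}(\hat p_j\le p_j-(p_j-\tau/M+\varepsilon/M))$ is at most $\mathbb{P}(\hat p_j\le p_j-\varepsilon/M)\le e^{-KH^-(p_j,\varepsilon/M)}$. We then need $H^-(p,a)$ to be decreasing in $p$ on the relevant range, so that $H^-(p_j,a)\ge H^-(\gamma/M+d_N,a)$. Its $p$-derivative is $\log\frac{(p-a)(1-p)}{p(1-p+a)}<0$, so this holds. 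For the count, Theorem~\ref{theorem:peak detection}(5) says each $I_k$ contains at most two points of $C$ (at most three grid points, with no three consecutive in $C$), so $|I_C|\le 2M$. Summing gives the first displayed inequality.

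\textbf{Second inequality and shot count.} For the second inequality we need $H^-(x,a)\ge H^+(x,a)$ for $x=\gamma/M+d_N$ and $a=\varepsilon/M$ small. This is the step I expect to be the main obstacle, since it is where the hypotheses $a<1/3$ and $M\ge3$ enter. I would try expanding $H^\pm(x,a)=\frac{a^2}{2x(1-x)}\mp\frac{a^3(1-2x)}{6x^2(1-x)^2}+\cdots$. Alternatively, I would show that $g(a)=H^-(x,a)-H^+(x,a)$ satisfies $g(0)=g'(0)=0$ and $g'(a)\ge0$ whenever $x\le 1/2$. The condition $x\le1/2$ follows from $\gamma<1.1$, $M\ge3$ and $d_N$ tiny. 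With this, $2M+(N-M)=N+M$ gives the bound. Setting it $\le\delta$ yields the condition on $K$. Finally, the asymptotic form comes from $H^+(\gamma/M,\varepsilon/M)=\frac1M[(\gamma+\varepsilon)\log\frac{\gamma+\varepsilon}{\gamma}-\varepsilon]+O(M^{-2})$, using $(1-x-a)\log\frac{1-x-a}{1-x}=-a+O(M^{-2})$ and $d_N=O(M^{-2})$ because $N\ge4M$.
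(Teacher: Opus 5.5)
Your overall route is the paper's, and most of it is sound. You use a KL-form Chernoff bound for each binomial count $K\hat p_j$, and replace $p_j$ by the worst-case base point $\gamma/M+d_N$ via monotonicity of $H^{\pm}(\cdot,a)$. You then take a union bound with $|I_C|\le 2M$ and $|I_D|\le N-M$, and finish with $H^-\ge H^+$. The counting is correct. So are the shifted-base-point Chernoff bounds, which hold because $D(q\,\|\,p)$ decreases in $p$ for $p<q$. Your proof of $H^-\ge H^+$ via $\partial_a(H^--H^+)=\log\frac{x^2((1-x)^2-a^2)}{(x^2-a^2)(1-x)^2}\ge 0$ for $x\le 1/2$ is also correct; the paper only asserts this. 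The asymptotic expansion is right as well.

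\textbf{The gap is the monotonicity step.} Your derivatives in $x$ are wrong. Since $H^{\pm}(x,a)=D(x\pm a\,\|\,x)$ and both arguments move with $x$, one gets
\begin{align*}
\partial_x H^+(x,a) &= \log\frac{(x+a)(1-x)}{x(1-x-a)} - \frac{a}{x(1-x)}, \\
\partial_x H^-(x,a) &= \log\frac{(x-a)(1-x)}{x(1-x+a)} + \frac{a}{x(1-x)}.
\end{align*}
Your expression for $H^+$ has the logarithm inverted, and both of yours drop the rational term.

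These derivatives are not negative throughout. We have $\partial_x H^+\to+\infty$ as $x\to 1-a$, and $\partial_x H^-\to+\infty$ as $x\to 1$. Concretely, $H^+(0.85,0.1)\approx 0.051<H^+(0.89,0.1)\approx 0.081$. So the claim ``$H^+$ decreasing for $x\le 1-a$'' is false, and so is ``$\partial_p H^-<0$''. Each function decreases only up to a turning point near $1/2$: the paper's $x_+\in((1-a)/2,1/2)$ and $x_-\in(1/2,(1+a)/2)$.

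\textbf{How to repair it.} Write
$\partial_x H^+(x,a)=\int_0^a t\bigl[\frac{1}{(1-x)(1-x-t)}-\frac{1}{x(x+t)}\bigr]\,dt$, which is $\le 0$ when $x\le (1-a)/2$. Similarly, $\partial_x H^-(x,a)=\int_0^a t\bigl[\frac{1}{(1-x)(1-x+t)}-\frac{1}{x(x-t)}\bigr]\,dt\le 0$ when $x\le 1/2$. Then verify that the base point lies in these ranges. The $H^+$ case needs $\gamma/M+d_N\le (1-\varepsilon/M)/2$, i.e.\ $M\ge 2\gamma+\varepsilon+2Md_N$. The right-hand side is at most about $2.25$ (see Remark~\ref{remark: KL divergence}).

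So $M\ge 3$ is already needed in the first displayed inequality, for both the $I_C$ and the $I_D$ replacement steps. It is not needed only in the comparison $H^-\ge H^+$, where you placed it. Your numerical observation ($\gamma<1.1$, $M\ge3$, $d_N$ tiny) does give $\gamma/M+d_N\approx 0.37$, which is below $(1-a)/2$. You therefore have the ingredient; it just has to be used in this step, with correct derivatives.
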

\begin{proof}
    Let $j \in I_C$.
    Since $p_j \geq \tau/M$, we have 
    \begin{align*}
        \left\{ \omega\in\Omega\,\middle|\, \hat{p}_j(\omega) \leq \frac{\tau}{M}-\frac{\varepsilon}{M} \right\} \subset \left\{ \omega\in\Omega \,\middle|\, p_j - \hat{p}_j(\omega) \geq \frac{\varepsilon}{M} \right\}.
    \end{align*}
    For $\lambda \geq 0$, using the independence of $\{X_k\}_{k=1}^\infty$,
    \begin{align*}
        \mathbb{P}\left(\hat{p}_j \leq \frac{\tau}{M} - \frac{\varepsilon}{M}\right)
        &\leq \mathbb{P}\left(p_j - \hat{p}_j \geq \frac{\varepsilon}{M}\right) \\[.5em]
        &\leq \mathbb{P}\left(e^{\lambda(p_j - \hat{p}_j)} \geq e^{\frac{\lambda\varepsilon}{M}}\right) \\[.5em]
        &\leq e^{-\lambda\varepsilon/M} \mathbb{E}[e^{\lambda(p_j - \hat{p}_j)}] \\[.5em]
        &= e^{\varphi(\lambda)},
    \end{align*}
    where $\varphi(\lambda) = -\lambda(\varepsilon/M - p_j) + K\log(e^{-\lambda/K}p_j + 1 - p_j)$.

    Let
    \begin{align*}
        \lambda_0 &= K\log\frac{p_j(1 - p_j + \varepsilon/M)}{(p_j - \varepsilon/M)(1 - p_j)},
    \end{align*}
    which minimizes $\varphi(\lambda)$ over $\lambda \geq 0$, that is,
    \begin{align*}
        \inf_{\lambda \geq 0} \varphi(\lambda)
        &= \varphi(\lambda_0) = -KH^{-}\left(p_j, \frac{\varepsilon}{M}\right).
    \end{align*}
    Taking $\lambda = \lambda_0$, we obtain
    \begin{align*}
        \mathbb{P}\left(\hat{p}_j \leq \frac{\tau}{M} - \frac{\varepsilon}{M}\right)
        &\leq \exp\left\{-KH^{-}\left(p_j, \frac{\varepsilon}{M}\right)\right\} \\[.5em]
        &\leq \exp\left\{-KH^{-}\left(\frac{\gamma}{M}+d_N, \frac{\varepsilon}{M}\right)\right\},
    \end{align*}
    where we used $p_j \leq \gamma/M+d_N$ and the fact that $x\mapsto H^{-}(x, \varepsilon/M)$ is decreasing on $(\varepsilon/M, \gamma/M+d_N]$ (Remark \ref{remark: KL divergence}).

    By Theorem~\ref{theorem:peak detection}(5), for each $k = 0, 1, \dots, M-1$, the number of indices $j$ with $a_j \in C \cap I_k$ is at most two.
    Combining this with Theorem~\ref{theorem:peak detection}(2), $|I_C| \leq 2M$, and hence
    \begin{align}
        \mathbb{P}\left(\min_{j \in I_C} \hat{p}_j \leq \frac{\tau}{M} - \frac{\varepsilon}{M}\right)
        &= \mathbb{P}\left(\bigcup_{j \in I_C} \left\{\hat{p}_j \leq \frac{\tau}{M} - \frac{\varepsilon}{M}\right\}\right) \nonumber\\[.5em]
        & \leq \sum_{j \in I_C} \mathbb{P}\left(\hat{p}_j \leq \frac{\tau}{M} - \frac{\varepsilon}{M}\right) \nonumber\\
        &\leq 2M \exp\left\{-KH^{-}\left(\frac{\gamma}{M}+d_N, \frac{\varepsilon}{M}\right)\right\}. \label{eq:case j in I_C}
    \end{align}

    Let $j \in I_D$.
    In the same manner as for $j \in I_C$, we obtain
    \begin{align}
        \mathbb{P}\left(\hat{p}_j \geq \frac{\sigma}{M} + \frac{\varepsilon}{M} + d_N\right)
        &\leq \mathbb{P}\left(\hat{p}_j - p_j \geq \frac{\varepsilon}{M}\right) \nonumber\\[.5em]
        &\leq \exp\left\{-KH^{+}\left(\frac{\gamma}{M}+d_N, \frac{\varepsilon}{M}\right)\right\}. \label{eq:case j in I_D}
    \end{align}

    By Theorem~\ref{theorem:peak detection}(1) and (2), $C$ contains at least $M$ elements, and thus $|I_D| \leq N - M$.
    Hence we have
    \begin{align*}
        \mathbb{P}\left(\max_{j \in I_D} \hat{p}_j \geq \frac{\sigma}{M} + \frac{\varepsilon}{M} + d_N \right)
        &\leq (N - M) \exp\left\{-KH^{+}\left(\frac{\gamma}{M}+d_N, \frac{\varepsilon}{M}\right)\right\}.
    \end{align*}

    The final inequality follows from
    \begin{align*}
        H^{-}\left(\frac{\gamma}{M}+d_N, \frac{\varepsilon}{M}\right)
        &\geq H^{+}\left(\frac{\gamma}{M}+d_N, \frac{\varepsilon}{M}\right).\qedhere
    \end{align*}
\end{proof}

\begin{remark}\label{remark: KL divergence}
    In the proof of Theorem~\ref{theorem:shot estimate}, we used the following facts.
    \begin{itemize}
        \item[(1)] There is an $x_+\in ((1-a)/2,1/2)$ such that $x\mapsto H^{+}(x,a)$ is decreasing on $(a,x_+]$.
        \item[(2)] There is an $x_-\in (1/2,(1+a)/2)$ such that $x\mapsto H^{-}(x,a)$ is decreasing on $(0,x_-]$.
        \item[(3)] $H^{-}(x,a)\geq H^{+}(x,a)$ on $(a,1/2]$.
    \end{itemize}
    In \eqref{eq:case j in I_D}, (1) was used.
    Letting $a=\varepsilon/M$, we have
    \begin{align*}
        \frac{\gamma}{M} + d_N \leq \frac{1-a}{2} \quad \Longleftrightarrow \quad M\geq 2\gamma + \varepsilon + 2Md_N.
    \end{align*}
    Since
    \begin{align*}
        Md_N 
        &= \frac{M(1-\tau)}{N^2} \leq \frac{1-\tau}{16}
    \end{align*}
    and
    \begin{align*}
        2\gamma + \varepsilon + 2Md_N
        &= 2\gamma + \varepsilon + \frac{Md_N}{2} + \frac{3Md_N}{2} \\
        &\leq 2\gamma + \frac{\tau-\sigma}{2} + \frac{3}{2}\frac{1-\tau}{16} = 2.25\dots,
    \end{align*}
    (1) is applicable if $M\geq 3$.

    Moreover, the same bound implies
    \begin{align*}
        \frac{\gamma}{M}+d_N \leq \frac{1-a}{2} \leq \frac12,
    \end{align*}
    so (2) and (3) are also applicable at the point $x=\gamma/M+d_N$ used in the proof.
\end{remark}

\subsection*{Peak Detection from Samples}
Let $\varepsilon>0$ satisfy
\begin{align*}
    0 < \varepsilon \leq \frac{\tau - \sigma}{2} - \frac{Md_N}{2}.
\end{align*}
For each $\omega\in \Omega$, define
\begin{align*}
    \hat{C}(\omega) &= \left\{ a_j\in A_N \,\middle|\, \hat{p}_j(\omega)\geq \frac{\tau}{M}-\frac{\varepsilon}{M} \right\},
\end{align*}
and define
\begin{align*}
    \hat{\Omega}
    &= \left\{ \omega\in\Omega \,\middle|\, \min_{j\in I_C}\hat{p}_j(\omega) \geq \frac{\tau}{M}-\frac{\varepsilon}{M} \right\} \cap \left\{ \omega\in\Omega \,\middle|\, \max_{j\in I_D}\hat{p}_j(\omega) < \frac{\sigma}{M}+\frac{\varepsilon}{M}+d_N \right\}.
\end{align*}
Then, since 
\begin{align*}
    \frac{\sigma}{M} + \frac{\varepsilon}{M} + d_N \leq \frac{\tau}{M} - \frac{\varepsilon}{M},
\end{align*}
it is immediate that
\begin{align}
    \hat{\Omega}
    &\subset \left\{ \omega\in\Omega \,\middle|\, C\subset \hat{C}(\omega)\subset \bigcup_{k=0}^{M-1}I_k \right\}. \label{eq:inclusion relation in C^}
\end{align}

\begin{theorem}\label{theorem:peak detection from samples}
    Assume that $N \geq 4M$ and
    \begin{align*}
        \frac{3}{N} < \min_{0 \leq k \leq M-1} |\theta_{k+1} - \theta_k|_{\mathbb{T}}.
    \end{align*}
    Let $\omega\in\hat{\Omega}$.
    Then the following hold.
    \begin{itemize}
        \item[(1)] For all $k=0,1,\dots, M-1$, $\hat{C}(\omega)\cap I_k\neq \emptyset$.
        \item[(2)] If $a_{j-1},a_{j},a_{j+1}\in\hat{C}(\omega)$, then there exists $k=0,1,\dots,M-1$ such that $\theta_k=a_j$.
        Moreover, $a_{j\pm 2}\notin \hat{C}(\omega)$.
        \item[(3)] If $a_j, a_{j+1} \in \hat{C}(\omega)$ and $a_{j-1}, a_{j+2} \notin \hat{C}(\omega)$, then there exists $k =0, 1, \dots, M-1$ such that $\theta_k \in [a_j, a_{j+1}]_{\mathbb{T}}$.
        \item[(4)] If $a_j \in \hat{C}(\omega)$ and $a_{j-1}, a_{j+1} \notin \hat{C}(\omega)$, then there exists $k =0, 1, \dots, M-1$ such that
        \begin{align*}
            \theta_k \in \left[a_j - \frac{1}{2N}, a_j + \frac{1}{2N}\right]_{\mathbb{T}}.
        \end{align*}
    \end{itemize}
\end{theorem}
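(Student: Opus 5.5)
The plan is to work entirely on the event $\hat{\Omega}$ and use the inclusion \eqref{eq:inclusion relation in C^}, which gives $C\subset\hat{C}(\omega)\subset\bigcup_{k}I_k$. Everything then reduces to deterministic geometry of the grid $A_N$ relative to the disjoint intervals $I_k$, together with Theorem~\ref{theorem:peak detection} and Lemma~\ref{lemma:Fejer properties}(iii).

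\emph{Step 1 (runs stay in one interval).} I will show that if two adjacent grid points $a_j,a_{j+1}$ both lie in $\bigcup_k I_k$, then they lie in the same $I_k$. Suppose instead $a_j\in I_k$ and $a_{j+1}\in I_l$ with $k\neq l$. The triangle inequality gives
\begin{align*}
|\theta_k-\theta_l|_{\mathbb{T}} \leq |\theta_k-a_j|_{\mathbb{T}}+|a_j-a_{j+1}|_{\mathbb{T}}+|a_{j+1}-\theta_l|_{\mathbb{T}} \leq \frac{3}{N}.
\end{align*}
This contradicts the \emph{strict} separation hypothesis $3/N<\min_k|\theta_{k+1}-\theta_k|_{\mathbb{T}}$. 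I expect this step to be the crux, and it is exactly where the strict inequality (as opposed to the $\leq$ of Theorem~\ref{theorem:peak detection}) is needed. Consequently every maximal run of consecutive points of $\hat{C}(\omega)$ lies in a single $I_k$. Since $I_k$ is a closed arc of length $2/N$, such a run has at most three points, and three points occur only when $\theta_k$ is the midpoint grid point.

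\emph{Step 2 (items (1)--(3)).} For (1), Theorem~\ref{theorem:peak detection}(2) gives $C\cap I_k\neq\emptyset$, and $C\subset\hat{C}(\omega)$ finishes it.

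For (2), Step 1 puts $a_{j-1},a_j,a_{j+1}$ in one $I_k$. Then $|\theta_k-a_{j\pm1}|_{\mathbb{T}}\le 1/N$, and these two distances sum to $2/N$ because $\theta_k$ lies between $a_{j-1}$ and $a_{j+1}$. Hence both equal $1/N$, which forces $\theta_k=a_j$. Next, $a_{j+2}$ is at distance $2/N$ from $\theta_k$, so $a_{j+2}\notin I_k$. If $a_{j+2}$ were in $\hat{C}(\omega)$, Step 1 applied to the pair $a_{j+1},a_{j+2}$ would put it in $I_k$, a contradiction. The same argument handles $a_{j-2}$.

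For (3), Step 1 puts $a_j,a_{j+1}$ in one $I_k$. If $\theta_k$ were outside $[a_j,a_{j+1}]_{\mathbb{T}}$, say on the side of $a_j$ away from $a_{j+1}$, then $|\theta_k-a_{j+1}|_{\mathbb{T}}>1/N$. This contradicts $a_{j+1}\in I_k$.

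\emph{Step 3 (item (4)).} Since $\hat{C}(\omega)\subset\bigcup_k I_k$, we have $a_j\in I_k$ for some $k$. Suppose $|\theta_k-a_j|_{\mathbb{T}}>1/(2N)$. Then the neighbour $a_{j\pm1}$ on the side of $\theta_k$ satisfies $|\theta_k-a_{j\pm1}|_{\mathbb{T}}<1/(2N)$. By evenness and periodicity of $F_N$ and Lemma~\ref{lemma:Fejer properties}(iii), applied with $x=2\pi|\theta_k-a_{j\pm1}|_{\mathbb{T}}\in[0,\pi/N]$, the $k$th term alone gives $p_{j\pm1}\ge\tau/M$. Thus $a_{j\pm1}\in C\subset\hat{C}(\omega)$, contradicting the hypothesis, and so $\theta_k\in[a_j-1/(2N),a_j+1/(2N)]_{\mathbb{T}}$.
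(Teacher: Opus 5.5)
Your proposal is correct and follows essentially the paper's route. You work on $\hat{\Omega}$ through $C\subset\hat{C}(\omega)\subset\bigcup_k I_k$, get (1) from Theorem~\ref{theorem:peak detection}(2), settle (2)--(3) by the same triangle-inequality geometry as the paper's ``as in the proof of Theorem~\ref{theorem:peak detection}(5)'', and prove (4) via Lemma~\ref{lemma:Fejer properties}(iii) as the paper does; your Step~1 also makes explicit what the paper uses tacitly when it places $a_{j-1},a_j,a_{j+1}$ (or $a_j,a_{j+1}$) in a single $I_k$, which is exactly where the strict separation $3/N<\min_k|\theta_{k+1}-\theta_k|_{\mathbb{T}}$ is needed.
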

\begin{proof}
    (1) It follows immediately from Theorem~\ref{theorem:peak detection}(2) and \eqref{eq:inclusion relation in C^}.

    (2) Let $a_{j-1},a_{j},a_{j+1}\in\hat{C}(\omega)$.
    By \eqref{eq:inclusion relation in C^}, we can take $k$ such that $a_{j-1}, a_j, a_{j+1} \in I_k$.
    Then, since $I_k$ contains exactly three adjacent points of $A_N$, $\theta_k=a_j$.

    As in the proof of Theorem~\ref{theorem:peak detection}(5), we obtain
    \begin{align*}
        |\theta_l - a_{j\pm 2}|_{\mathbb{T}}>\frac{1}{N}, \quad l=0,1,\dots,M-1.
    \end{align*}
    Hence $a_{j\pm 2}\notin \bigcup_{k=0}^{M-1}I_k$, and therefore, by \eqref{eq:inclusion relation in C^}, $a_{j\pm 2}\notin \hat{C}(\omega)$.

    (3) Let $a_j, a_{j+1} \in \hat{C}(\omega)$.
    By \eqref{eq:inclusion relation in C^}, $a_j,a_{j+1}\in I_k$ for some $k$.
    Now suppose that $\theta_k\notin [a_j,a_{j+1}]_{\mathbb{T}}$.
    Then
    \begin{align*}
        |\theta_k-a_j|_{\mathbb{T}}>\frac{1}{N}
        \quad \text{or} \quad
        |\theta_k-a_{j+1}|_{\mathbb{T}}>\frac{1}{N}.
    \end{align*}
    
    If $|\theta_k-a_j|_{\mathbb{T}}>1/N$, then, by the same argument as in (2), we conclude that $a_j\notin \hat{C}(\omega)$.
    Similarly, if $|\theta_k-a_{j+1}|_{\mathbb{T}}>1/N$, then $a_{j+1}\notin \hat{C}(\omega)$.
    This contradicts the assumption.

    (4) Suppose $a_j \in \hat{C}(\omega)$ and $a_{j-1}, a_{j+1} \notin \hat{C}(\omega)$.
    Since $a_j \in \hat{C}(\omega)$, by \eqref{eq:inclusion relation in C^}, $a_j\in I_k$ for some $k$.
    If $\theta_k \notin [a_j - 1/(2N), a_j + 1/(2N)]_{\mathbb{T}}$, then $a_{j-1} \in I_k$ or $a_{j+1} \in I_k$.    

    If $a_{j-1} \in I_k$, then $|\theta_k - a_{j-1}|_{\mathbb{T}} < 1/(2N)$.
    By Lemma~\ref{lemma:Fejer properties}(iii), we have
    \begin{align*}
        p_{j-1} \geq \frac{\tau}{M}.
    \end{align*}
    Hence $a_{j-1} \in C$.
    Since $\omega \in \hat{\Omega}$ and $C \subset \hat{C}(\omega)$ by \eqref{eq:inclusion relation in C^}, this contradicts $a_{j-1} \notin \hat{C}(\omega)$.
    The case $a_{j+1} \in I_k$ leads to a similar contradiction.
\end{proof}

By \eqref{eq:inclusion relation in C^} and Theorem~\ref{theorem:peak detection from samples}, for every $\omega \in \hat{\Omega}$,
\begin{align*}
    C \subset \hat{C}(\omega) \subset \bigcup_{k=0}^{M-1} I_k,
\end{align*}
and properties~(1)--(4) hold.

By Theorem~\ref{theorem:shot estimate}, if $M \geq 3$, then for sufficiently large $K$,
\begin{align*}
    \mathbb{P}(\hat{\Omega}) \geq 1-\delta.
\end{align*}
Hence, with probability at least $1-\delta$, the above properties hold.

\section*{Acknowledgements}

The authors acknowledge the use of OpenAI ChatGPT 5.4 and Google Gemini 3.1 for editorial assistance, including suggestions on English phrasing, clarity of exposition, and manuscript organization. 
Google Gemini 3.1 was also used to assist in drafting code for the numerical experiments; the authors reviewed and validated the implementation and all reported results.



\begin{thebibliography}{99}

    \bibitem{bai2026random}
    Y.\,Bai, F.\,Xiong, and X.\,Kuang,
    Random State Approach to Quantum Computation of Electronic-Structure Properties,
    \textit{Chinese Physics Letters} \textbf{43} (2026), 010604.
    DOI:10.1088/0256-307X/43/1/010604.

    \bibitem{dutkiewicz2022heisenberg}
    A.\,Dutkiewicz, B.\,M.\,Terhal, and T.\,E.\,O'Brien,
    Heisenberg-Limited Quantum Phase Estimation of Multiple Eigenvalues with Few Control Qubits,
    \textit{Quantum} \textbf{6} (2022), 830.
    DOI:10.22331/q-2022-10-06-830.

    \bibitem{GSLW19}
    A.\,Gily\'{e}n, Y.\,Su, G.\,H.\,Low, and N.\,Wiebe,
    Quantum Singular Value Transformation and Beyond: Exponential Improvements for Quantum Matrix Arithmetics,
    \textit{Proceedings of the 51st Annual ACM SIGACT Symposium on Theory of Computing (STOC 2019)},
    2019.
    arXiv:1806.01838 [quant-ph].

    \bibitem{harrow2009quantum}
    A.\,W.\,Harrow, A.\,Hassidim, and S.\,Lloyd,
    Quantum Algorithm for Linear Systems of Equations,
    \textit{Physical Review Letters} \textbf{103} (2009), 150502.
    DOI:10.1103/PhysRevLett.103.150502.

    \bibitem{goh2026dos}
    M.\,L.\,Goh and B.\,Koczor,
    Direct Estimation of the Density of States for Fermionic Systems,
    \textit{Physical Review Research} \textbf{8} (2026), 013250.
    DOI:10.1103/h8l5-87zt.

    \bibitem{lim2024curvefitted}
    S.\,M.\,Lim, C.\,E.\,Susa, and R.\,Cohen,
    Curve-Fitted QPE: Extending Quantum Phase Estimation Results for a Higher Precision Using Classical Post-Processing,
    arXiv:2409.15752 [quant-ph], 2024.

    \bibitem{low2019hamiltonian}
    G.\,H.\,Low and I.\,L.\,Chuang,
    Hamiltonian Simulation by Qubitization,
    \textit{Quantum} \textbf{3} (2019), 163.
    arXiv:1610.06546 [quant-ph].

    \bibitem{martyn2021grand}
    J.\,M.\,Martyn, Z.\,M.\,Rossi, A.\,K.\,Tan, and I.\,L.\,Chuang,
    A Grand Unification of Quantum Algorithms,
    \textit{PRX Quantum} \textbf{2} (2021), 040203.
    DOI:10.1103/PRXQuantum.2.040203.

    \bibitem{mele2024haar}
    A.\,A.\,Mele,
    Introduction to Haar Measure Tools in Quantum Information: A Beginner's Tutorial,
    \textit{Quantum} \textbf{8} (2024), 1340.
    DOI:10.22331/q-2024-05-08-1340.

    \bibitem{nielsen2010quantum}
    M.\,A.\,Nielsen and I.\,L.\,Chuang,
    \textit{Quantum Computation and Quantum Information},
    Cambridge University Press, Cambridge, 2010.

    \bibitem{obrien2019quantum}
    T.\,E.\,O'Brien, B.\,Tarasinski, and B.\,M.\,Terhal,
    Quantum Phase Estimation of Multiple Eigenvalues for Small-Scale (Noisy) Experiments,
    \textit{New Journal of Physics} \textbf{21} (2019), 023022.

    \bibitem{shor1994algorithms}
    P.\,W.\,Shor,
    Algorithms for Quantum Computation: Discrete Logarithms and Factoring,
    \textit{Proceedings of the 35th Annual Symposium on Foundations of Computer Science (FOCS 1994)},
    IEEE, 1994, pp.\,124--134.
    DOI:10.1109/SFCS.1994.365700.

    \bibitem{somma2019eigenvalue}
    R.\,D.\,Somma,
    Quantum Eigenvalue Estimation via Time Series Analysis,
    \textit{New Journal of Physics} \textbf{21} (2019), 123025.
    DOI:10.1088/1367-2630/ab5c60.

    \bibitem{wilde2017quantum}
    M.\,M.\,Wilde,
    \textit{Quantum Information Theory},
    Cambridge University Press, Cambridge, 2017.

\end{thebibliography}
\end{document}